\def\BibTeX{{\rm B\kern-.05em{\sc i\kern-.025em b}\kern-.08em
    T\kern-.1667em\lower.7ex\hbox{E}\kern-.125emX}}
\newtheorem{definition}{Definition}
\newtheorem{proposition}{Proposition}
\newtheorem{theorem}{Theorem}
\definecolor{col_g080}{rgb}{1.0, 0.5, 0.0}
\definecolor{col_g085}{rgb}{0.0, 0.5, 0.0}
\definecolor{col_g090}{rgb}{0.0, 0.0, 1.0}
\definecolor{col_g095}{rgb}{1.0, 0.0, 0.0}
\definecolor{col_warm}{rgb}{0.0, 0.5, 0.0}
\definecolor{col_cold}{rgb}{1.0, 0.0, 0.0}
\definecolor{col_per00}{rgb}{1.0, 0.5, 0.0}
\definecolor{col_per01}{rgb}{0.0, 0.5, 0.0}
\definecolor{col_per03}{rgb}{0.0, 0.0, 1.0}
\pgfplotsset{compat=1.14}
\Crefname{equation}{Eq.}{Eqs.}
\Crefname{figure}{Fig.}{Figs.}
\Crefname{section}{Sec.}{Secs.}
\Crefname{table}{Tab.}{Tabs.}
\Crefname{appendix}{App.}{Appx.}
\Crefname{definition}{Def.}{Defs.}
\Crefname{proposition}{Prop.}{Props.}
\Crefname{theorem}{Thm.}{Thms.}
\newcommand\copyrighttext{%
  \footnotesize \textcopyright 2024 IEEE. Personal use of this material is permitted.
  Permission from IEEE must be obtained for all other uses, in any current or future
  media, including reprinting/republishing this material for advertising or promotional
  purposes, creating new collective works, for resale or redistribution to servers or
  lists, or reuse of any copyrighted component of this work in other works.}
\newcommand\copyrightnotice{%
\begin{tikzpicture}[remember picture,overlay]
\node[anchor=south,yshift=10pt] at (current page.south) {\fbox{\parbox{\dimexpr\textwidth-\fboxsep-\fboxrule\relax}{\copyrighttext}}};
\end{tikzpicture}%
}
\begin{document}

\usetikzlibrary{calc}

\newacronym{vqc}{VQC}{variational quantum circuit}
\newacronym{vqa}{VQA}{variational quantum algorithm}
\newacronym{rl}{RL}{reinforcement learning}
\newacronym{qrl}{QRL}{quantum reinforcement learning}
\newacronym{ml}{ML}{machine learning}
\newacronym{qml}{QML}{quantum machine learning}
\newacronym{dnn}{DNN}{deep neural network}
\newacronym{qc}{QC}{quantum computing}
\newacronym{nisq}{NISQ}{noisy intermediate-scale quantum}
\newacronym{qpg}{QPG}{quantum policy gradient}
\newacronym{mdp}{MDP}{Markov Decision Process}
\newacronym{qpi}{QPI}{quantum policy iteration}
\newacronym{lse}{LSE}{linear system of equations}
\newacronym{varqpi}{VarQPI}{variational quantum policy iteration}
\newacronym{wsvarqpi}{WS-VarQPI}{warm-start VarQPI}
\newacronym{svd}{SVD}{singular value decomposition}

\title{Warm-Start Variational Quantum Policy Iteration
    \thanks{The research was supported by the Bavarian Ministry of Economic Affairs, Regional Development and Energy with funds from the Hightech Agenda Bayern via the project BayQS.\\
    $^{\ast}$These authors contributed equally.\\
    Correspondence to: nico.meyer@iis.fraunhofer.de}
}

\author{
    \IEEEauthorblockN{
        Nico Meyer$^{\ast,1,2}$, Jakob Murauer$^{\ast,1}$, Alexander Popov$^{1}$, Christian Ufrecht$^{1}$\\Axel Plinge$^{1}$, Christopher Mutschler$^{1}$, and Daniel D. Scherer$^{1}$
    }
    \IEEEauthorblockA{
        $^{1}$Fraunhofer IIS, Fraunhofer Institute for Integrated Circuits IIS, Nürnberg, Germany\\
        $^{2}$Pattern Recognition Lab, Friedrich-Alexander-Universität Erlangen-Nürnberg, Erlangen, Germany
    }
}

\maketitle

\begin{abstract}
    Reinforcement learning is a powerful framework aiming to determine optimal behavior in highly complex decision-making scenarios. This objective can be achieved using policy iteration, which requires to solve a typically large linear system of equations. We propose the \gls{varqpi} algorithm, realizing this step with a NISQ-compatible quantum-enhanced subroutine. Its scalability is supported by an analysis of the structure of generic reinforcement learning environments, laying the foundation for potential quantum advantage with utility-scale quantum computers. Furthermore, we introduce the warm-start initialization variant (WS-VarQPI) that significantly reduces resource overhead. The algorithm solves a large \texttt{FrozenLake} environment with an underlying $256 \times 256$-dimensional linear system, indicating its practical robustness.
\end{abstract}

\begin{IEEEkeywords}
quantum computing, quantum reinforcement learning, policy iteration, variational quantum algorithms, quantum linear algebra
\end{IEEEkeywords}
\copyrightnotice  
\glsresetall
\vspace{-0.4cm}  
\section{Introduction.}

\Gls{qrl}~\cite{Meyer_2022a} is an active field of research at the intersection of \gls{qc} and machine learning. The success of classical \gls{rl}~\cite{Jumper_2021,Ruiz_2024} can mostly be attributed to the use of large neural networks as function approximators for the \gls{rl} setup. The advent of \gls{nisq}~\cite{Preskill_2018} devices in the last decade has spurred the development of so-called variational approaches~\cite{Bharti_2022}. The typical idea in the realm of \gls{qrl} is to use \glspl{vqc} for expressing either the intended behavior directly~\cite{Jerbi_2021,Meyer_2023a}, or approximate an intermediate value function~\cite{Chen_2020,Skolik_2022}. Extensions include combining multiple function approximators~\cite{Wu_2023}, and advanced training routines~\cite{Meyer_2023b} for the underlying \glspl{vqc}. While these approaches are quite flexible, they make it difficult to exploit structure of the considered problem, blurring the path to potential quantum advantage.

A more structured approach to \gls{rl} with dynamic programming is based on so-called policy iteration~\cite{Howard_1960,Lagoudakis_2003}. However, directly solving the associated large \gls{lse} is typically intractable with classical methods -- requiring to resort to iterative solution methods~\cite{Sutton_2019}. This suggests the application of quantum-enhanced linear algebra routines, which provide exponential speed-ups for solving a \gls{lse} under certain conditions~\cite{Harrow_2009,Low_2019}. Unfortunately, executing those routines requires large-scale fault-tolerant quantum devices, which are currently not available. \gls{nisq}-compatible variational algorithms for linear algebra have been proposed~\cite{Bravo_2023,Xu_2021} -- with empirical results suggesting scaling behavior~\cite{Bravo_2023} competitive with HHL-based approaches. 

\begin{figure}
    \centering
    \def\svgwidth{0.98\linewidth}
    \import{figures}{vqpi.pdf_tex}
    \captionsetup{font=footnotesize}
    \caption{Proposed routine of variational quantum policy iteration. The evaluation of the state-action value function $Q_{\pi}$ is formulated as a \gls{lse} $A \bm{x} = \bm{b}$. The solution is computed variationally~\cite{Bravo_2023}, where the blue sub-circuit prepares a state proportional to $Q_{\pi}$. Classically efficient $\ell_{\infty}$-tomography is used for policy improvement. This procedure is iterated until an optimal policy $\pi_*$ is found. After random initialization, warm-start variational parameters $\bm{\alpha}$ are carried over from the previous iteration, leading to faster policy evaluation.}
    \label{fig:vqpi}
\end{figure}

\textbf{Contribution.}
We introduce the \gls{varqpi} algorithm, using a variational linear system solver to perform the quantum-enhanced policy evaluation step within a policy iteration framework. The system matrices associated with generic \gls{rl} environments are shown to be well-behaved with respect to quantum-enhanced linear algebra routines. We extend the algorithm with a warm-start parameter initialization approach to \gls{wsvarqpi}, which significantly speeds up convergence. The efficiency of the algorithm is demonstrated on different instances of \texttt{FrozenLake} environments, supported with an extensive analysis of the resource requirements.

\textbf{Related Work.} 
The application of fault-tolerant linear system solvers to \gls{qpi} has been studied by Cherrat et al.~\cite{Cherrat_2023}. We focus our analysis on the \gls{nisq}-compatible regime, employing a variational \gls{lse} solver proposed in~\cite{Bravo_2023} and implemented in the \texttt{variational-lse-solver} library~\cite{Meyer_2024}. The proposed algorithm uses direct policy evaluation (instead of the typical iterative approach)~\cite{Sutton_2019,Lagoudakis_2003}, combined with classically efficient $\ell_{\infty}$ tomography~\cite{Kerenidis_2019,Cherrat_2023} for policy improvement.

This work is structured as follows: In~\cref{sec:method} we introduce preliminaries and describe (warm-start) \gls{varqpi} in detail. An experimental demonstration of \gls{varqpi} and \gls{wsvarqpi} is conducted in~\cref{sec:experiments}, completed with experiments on a large \texttt{FrozenLake} environment. An analysis of resource requirements for the quantum-enhanced routine is conducted in~\cref{sec:resource_analysis}. We summarize the results, highlight bottlenecks, and pinpoint potential for future work in~\cref{sec:conclusion}. We study the \gls{lse} induced by typical \gls{rl} environments regarding requirements on sparseness and condition number in the appendix.

\section{\label{sec:method}Preliminaries and Method Description}

\gls{rl} is an algorithmic framework to solve complex time-dependent decision-making tasks. The underlying \gls{mdp} can be expressed as a $5$-tuple ($\mathcal{S}, \mathcal{A}, r, p, \gamma)$, with state set $\mathcal{S}$, and action set $\mathcal{A}$. The reward function $r(s,a,s') \mapsto \mathbb{R}$ assigns a scalar value to executing action $a$ in state $s$ and transitioning to state $s'$. The underlying dynamics are described by $p(s'|s,a) \mapsto \left[ 0, 1 \right]$, with $\sum_{s'} p(s'|s,a) = 1$ for all $s \in \mathcal{S}, ~a \in \mathcal{A}$. The discount factor $0 \leq \gamma \leq 1$ controls the weighting of immediate and long-term rewards. The goal is to derive a policy $\pi(a|s) \mapsto \left[ 0, 1 \right]$, that maximizes the discounted long-term reward $G_t \gets \sum_{t'=t}^{\infty} \gamma^{t'-t} r_{t'}$ over discrete timesteps $t$, with $r_{t'}$ following the reward function.

\subsection{Direct Policy Iteration}

To maximize the long-term reward, one can start off with the Bellman equation for the state-action value function~\cite{Sutton_2019}:
{\small\begin{align}
    \label{eq:bellman_equation}
    q_{\pi}(s,a) &= \sum_{s'} p(s'|a,s) \left[ r(s,a,s') + \gamma \sum_{a'} \pi(a'|s') q_{\pi}(s',a') \right]
\end{align}}
Solving this equation for a given $\pi$ is referred to as \emph{policy evaluation}, with existence of a solution being guaranteed for $\gamma < 1$~\cite{Sutton_2019}. Assuming a solution to~\cref{eq:bellman_equation} for now, one can perform \emph{greedy policy improvement} as
\begin{align}
    \label{eq:greedy_policy}
    \pi(s) \gets \mathrm{argmax}_a ~q(s,a),
\end{align}
with $\pi(a|s) = 1$, if $a = \pi(s)$, and $\pi(a|s) = 0$ otherwise. The repeated execution of policy evaluation and improvement is referred to as \emph{policy iteration}. It is guaranteed to converge to an optimal $\pi_*$ and typically does so very fast~\cite{Sutton_2019}.

Evaluating~\cref{eq:bellman_equation} can be expressed as solving a \gls{lse} with $\abs{\mathcal{S}} \cdot \abs{\mathcal{A}}$ equations and unknowns. In practice -- due to the large system size -- one often resorts to iterative solution methods~\cite{Sutton_2019}. Using matrix notation for~\cref{eq:bellman_equation}, we get 
\begin{align}
    \label{eq:bellman_equation_matrix}
    Q_{\pi} = R + \gamma P \Pi Q_{\pi},
\end{align}
with state-action vector $Q_{\pi} = q(s_i,a_k)_{(ik)} \in \mathbb{R}^{\abs{\mathcal{S}} \times \abs{\mathcal{A}}}$, reward vector $R =  \sum_{s'} p(s'|s_i,a_k)r(s_i,a_k,s')_{(ik)} \in \mathbb{R}^{\abs{\mathcal{S}} \times \abs{\mathcal{A}}}$,  transition probability matrix $P = p(s'_j|s_i,a_k)_{(ik),j} \in \mathbb{R}^{\abs{\mathcal{S}} \abs{\mathcal{A}} \times \abs{\mathcal{S}}}$, and policy matrix $\Pi = \pi(a'_l|s'_j)_{j,(jl)} \in \mathbb{R}^{\abs{\mathcal{S}} \times \abs{\mathcal{S}} \abs{\mathcal{A}}}$. Furthermore, \cref{eq:bellman_equation_matrix} can be re-formulated to
\begin{align}
    \label{eq:bellman_equation_lse}
    (\mathbb{I} - \gamma P \Pi) Q_{\pi} &= R,
\end{align}
which can be interpreted as a \gls{lse} with system matrix $A_{\pi} := \mathbb{I} - \gamma P \Pi$, right-hand side $R$, and sought-after solution $Q_{\pi}$. 
The generalization of our results to model-free scenarios with function approximation -- also referred to as least-squares policy iteration~\cite{Lagoudakis_2003} -- is left for future work.

\subsection{\label{subsec:lse}Variational LSE Solvers}

Let the system matrix from~\cref{eq:bellman_equation_lse} be $A \in \mathbb{R}^{N \times N}$, where $N := \abs{\mathcal{S}} \cdot \abs{\mathcal{A}}$. Classical solution methods scale polynomially in the system size and might become infeasible for large state and action spaces. Quantum-enhanced methods like HHL~\cite{Harrow_2009} or modern variants~\cite{Gilyen_2019,Low_2019} promise poly-logarithmic scaling in the system size $N$. The use of these techniques in the context of quantum policy iteration has been researched by Cherrat et al.~\cite{Cherrat_2023}. Unfortunately, executing these algorithms requires large-scale and fault-tolerant quantum devices. An alternative for the \gls{nisq}-era is the use of corresponding variational algorithms~\cite{Cerezo_2021}. While these typically do not provide explicit guarantees of quantum advantage, empirical speed-ups in certain scenarios have been observed. We focus on a variational \gls{lse} solver proposed by Bravo-Prieto et al.~\cite{Bravo_2023}.

In its most abstract form, the algorithm is designed for preparing a state $\ket{x}$, that is proportional to the solution of $A\bm{x}=\bm{b}$. To enable training, one can use a parameterized ansatz $\ket{x(\bm{\alpha})}$. Furthermore, it must be possible to prepare a normalized version of $\bm{b}$, i.e. a state $\ket{b}$. While $A$ is not unitary in general, it is always possible to find a unitary decomposition
\begin{align}
    \label{eq:unitary_decomposition}
    A = \sum_{k=0}^{L-1} c_k A_k,
\end{align}
where $c_k$ are complex coefficients~\cite{Zhan_2001} -- see also~\cref{app:decomposition}. With the in general not normalized state $\ket{\psi} = A \ket{x(\bm{\alpha})}$ and circuit realizations of the unitaries $A_k$, it is possible to evaluate a loss function
\begin{align}
    \label{eq:loss}
    C_G = 1 - \abs{\expval{b | \Psi}}^2,
\end{align}
with $\ket{\Psi} = \ket{\psi}/\expval{\psi|\psi}$. The loss vanishes for $\ket{\psi}$ proportional to $\ket{b}$, and therefore can be used to drive the parameters $\bm{\alpha}$ towards this objective. The use of Hadamard and Hadamard-overlap tests allow the evaluation of the loss function on quantum hardware (see e.g.~\cite{Bravo_2023}). Furthermore,~\cref{eq:loss} is differentiable, which enables the use of gradient-based optimization techniques. Apart from that, also a local version of the cost function exists, which alleviates training~\cite{Bravo_2023}. The routine for \gls{varqpi} implemented in this paper is based on the library \texttt{variational-lse-solver}, which contains multiple quantum-enhanced \gls{lse} solvers~\cite{Meyer_2024}.

An bound on the required accuracy is given by the operational meaning of the loss function $C_G$, reading as
\begin{equation}
    \label{eq:threshold}
    C_G \geq \frac{\epsilon^2}{\kappa^2}
\end{equation}
for a desired trace distance $\epsilon$ between optimal and approximate solution\cite{Bravo_2023}. Here, $\kappa = \frac{\sigma_{\mathrm{max}}}{\sigma_{\mathrm{min}}}$ denotes the condition number of the system matrix $A$, where $\sigma_{\mathrm{max}}(A) := \max_{\| x \| = 1} \| A x \|$ and $\sigma_{\mathrm{min}}(A) := \min_{\| x \| = 1} \| A x \|$ denote the largest and smallest singular values, respectively. We demonstrate in~\cref{app:condition_number} that the systems associated with typical \gls{rl} environments are well-conditioned. The results in~\cref{subsec:threshold} suggest that~\cref{eq:threshold} is not tight, allowing for much faster convergence in practice. Furthermore, matrix sparsity -- implicitly assumed throughout the work of Bravo-Prieto et al.~\cite{Bravo_2023} -- is investigated in~\cref{app:sparsity}.




\subsection{Variational Quantum Policy Iteration with Warm Start}

The routine for variational \gls{varqpi} combines quantum-enhanced policy evaluation with $\ell_{\infty}$-tomography and classical policy improvement~\cite{Cherrat_2023,Kerenidis_2019} to the following algorithm: For a random initial (deterministic) policy $\pi$, one variationally prepares a state $\ket{Q_{\pi}} \equiv \ket{x(\bm{\alpha}_*)}$, such that
\begin{align}
    \label{eq:var_policy_evalauation}
    \ket{Q_{\pi}} &\approx \frac{1}{\| Q_{\pi} \|} \begin{bmatrix} Q_{\pi}(s=0,a=0) \\ Q_{\pi}(s=0,a=1) \\ \vdots \\ Q_{\pi}(s=1,a=0) \\ \vdots \end{bmatrix} \\
    &\sim (\mathbb{I} - \gamma P \Pi)^{-1} R 
\end{align}
where $\bm{\alpha}_*$ are the parameters optimized using the algorithm described in~\cref{subsec:lse} for the system in~\cref{eq:bellman_equation_lse}. Subsequently, multiple preparations of $\ket{Q_{\pi}}$ are used to acquire counts $M(s,a)$. The raw bitstring measurements are associated with $(s,a)$ following~\cref{eq:var_policy_evalauation}. Policy improvement is furthermore conducted as
\begin{align}
    \label{eq:greedy_action_selection}
    \pi(s) \gets \mathrm{argmax}_a M(s,a).
\end{align}
Since policy iteration converges with respect to the $\ell_{\infty}$ norm, it has been shown~\cite{Cherrat_2023} that $M(s,a)$ can be estimated using $\ell_{\infty}$-tomography (see e.g.~\cite{Kerenidis_2019}). This requires only a polynomial number of samples, even though $M(s,a)$ is of exponential size in the number of qubits.

This application of quantum policy evaluation and classical policy improvement is repeated iteratively. The initial parameters for the first policy evaluation step $\bm{\alpha}_{\mathrm{init}}^{0}$ are drawn uniformly at random from $\left[ 0, 2 \pi \right]$, and trained to $\bm{\alpha}_*^{0}$ (up to an arbitrary small error threshold). Due to policy iteration updating the value function and associated greedy policy step by step, the greedy actions between successive cycles only change for a limited number of states. Therefore, we assume that $| Q_{\pi^{0}} \rangle \equiv |x(\bm{\alpha}_*^{0})\rangle$ is reasonably close to $\ket{Q_{\pi^{1}}}$. Similarly, $\ket{Q_{\pi^{i}}}$ should be close to $\ket{Q_{\pi^{i+1}}}$ for each two consecutive iterations. Exploiting this, we formulate \gls{wsvarqpi} by initializing the variational parameters
\begin{align}
    \bm{\alpha}_{\mathrm{init}}^{i+1} \gets \bm{\alpha}_*^{i},
\end{align}
with $\bm{\alpha}_{\mathrm{init}}^{0}$ selected randomly as described above. This procedure is also sketched in~\cref{fig:vqpi}. Policy evaluation and improvement is repeated to convergence, i.e.\ until $\pi^{i+1}(s) = \pi^{i}(s)$ for all non-terminal states $s$. We demonstrate in~\cref{subsec:warm_start}, that the proposed \gls{wsvarqpi} strategy in fact enhances the resource efficiency compared to standard \gls{varqpi}, which is consistent with results on warm-start approaches for other types of variational algorithms~\cite{Niu_2023,Truger_2024}.

The improvement regarding time complexity over classical policy iteration depends on the speed-up gained by using the variational \gls{lse} solver, which has empirically shown to be super-polynomial under certain conditions~\cite{Bravo_2023} -- further investigated in~\cref{app:sparsity,app:condition_number}. Furthermore, it is possible to implement an $N \times N$ system using only $\log_2{N}$ qubits, providing an exponential improvement w.r.t.\ space complexity.

\section{\label{sec:experiments}Experimental End-to-End Realization}

\begin{table}[t]
    \centering
    \begin{tabular}{@{}c||cc|cc@{}}
        \toprule
        \multirow{2}{*}{~Iteration} & \multicolumn{2}{c|}{\gls{wsvarqpi}} & \multicolumn{2}{c}{\gls{varqpi}} \\
         & Average Steps & Percentage & Average Steps & Percentage~ \\
        \midrule
        1 & $1337\pm275$ & $100\%$ & $1337\pm275$ & $100\%$  \\
        2 & $1027\pm258$ & $100\%$ & $1313\pm295$ & $100\%$  \\
        3 & $\phantom{0}918\pm262$ & $\phantom{0}99\%$ & $1497\pm291$ & $\phantom{0}99\%$  \\
        4 & $\phantom{0}690\pm297$ & $\phantom{0}71\%$ & $1542\pm258$ & $\phantom{0}70\%$  \\
        5 & $\phantom{0}540\pm208$ & $\phantom{0}29\%$ & $1488\pm223$ & $\phantom{0}24\%$  \\
        6 & $\phantom{0}295\pm\phantom{0}40$ & $\phantom{00}8\%$ & $1578\pm158$ & $\phantom{00}6\%$  \\
        \bottomrule
    \end{tabular}
    \vspace{0.15cm}
    \captionsetup{name=Tab.,format=hang,indention=-0.9cm,font=footnotesize}
    \caption{Comparison of required steps for solving the \gls{lse} induced by \texttt{FrozenLake} (stochasticity $\beta=0.1$) with warm-start and random parameter initialization. The values are reported on a per-iteration basis, and averaged over $100$ random initial policies. Additionally we report, which percentage of runs has not converged after the indicated iteration number.}\label{tab:warm_start}%
\end{table}
\begin{figure}[t]
    \usetikzlibrary{calc}
    \centering
    \begin{tikzpicture}
        \centering
        \begin{axis}[
            name=plot1,
            xlabel=$\textbf{steps}$,
            ylabel=$\textbf{loss}$,
            xtick={0, 2000, 4000},
            ymode=log,
            xmin=0.0,xmax=4500.0,
            ymin=0.00005,ymax=2,
            label style={font=\footnotesize},
            tick label style={font=\footnotesize},
            ymajorgrids=true,
            axis x line=bottom, axis y line=left,
            width=0.5\linewidth,
            height=4cm,
            legend style={/tikz/every even column/.append style={column sep=0.1cm, row sep=0.1cm},at={(0.33,0.88)},anchor=east,yshift=-5mm,font=\scriptsize}]
            ]

            \addplot[line width=.7pt,solid,color=col_warm] %
            	table[x=step,y=loss,col sep=comma]{figures/data/warm_start.csv};

            \draw[color=black, densely dotted, line width=0.7pt](axis cs: 1378, 0.00001) -- (axis cs: 1378, 1.0);
            \draw[color=black, densely dotted, line width=0.7pt](axis cs: 2659, 0.00001) -- (axis cs: 2659, 1.0);
            \draw[color=black, densely dotted, line width=0.7pt](axis cs: 3681, 0.00001) -- (axis cs: 3681, 1.0);

        \end{axis}
        \begin{axis}[
            name=plot2,
            xlabel=$\textbf{steps}$,
            at={($(plot1.east)+(0.5cm,0)$)}, anchor=west,
            yticklabels={},
            xtick={0, 2000, 4000, 6000},
            ymode=log,
            xmin=0.0,xmax=6500.0,
            ymin=0.00005,ymax=2,
            label style={font=\footnotesize},
            tick label style={font=\footnotesize},
            ymajorgrids=true,
            axis x line=bottom, axis y line=left,
            width=0.64\linewidth,
            height=4cm,
            legend style={/tikz/every even column/.append style={column sep=0.1cm, row sep=0.1cm},at={(0.33,0.88)},anchor=east,yshift=-5mm,font=\scriptsize}]
            ]

            \addplot[line width=.7pt,solid,color=col_cold] %
            	table[x=step,y=loss,col sep=comma]{figures/data/cold_start.csv};

            \draw[color=black, densely dotted, line width=0.7pt](axis cs: 1378, 0.00001) -- (axis cs: 1378, 1.0);
            \draw[color=black, densely dotted, line width=0.7pt](axis cs: 3239, 0.00001) -- (axis cs: 3239, 1.0);
            \draw[color=black, densely dotted, line width=0.7pt](axis cs: 4728, 0.00001) -- (axis cs: 4728, 1.0);

        \end{axis}
       
        \end{tikzpicture}
    \captionsetup{font=footnotesize}
    \caption{\label{fig:warm_start}Loss curves for one instance of \gls{wsvarqpi} (left plot) and randomly initialized \gls{varqpi} (right plot) on \texttt{FrozenLake} with $\beta=0.1$. The vertical dotted lines indicate one iteration of policy evaluation with loss threshold $0.0001$. The procedure continues with the updated \gls{lse} after policy improvement -- with the previously optimal parameters, or the initial parameters from the first iteration. The large variance in the training procedure can be attributed to the use of a static learning rate of $0.01$. While learning rate schedulers might be used to reduce the fluctuations, the required additional hyperparameter tuning was found to be non-trivial for the proof-of-concept realization.}
\end{figure}


The implementation of the proposed \gls{varqpi} and \gls{wsvarqpi} algorithms described in~\cref{sec:method} use the \texttt{PennyLane}-based \texttt{variational-lse-solver} library~\cite{Meyer_2024}. For depth $d$, the parameterized ansatz $\ket{x(\bm{\alpha})}$ consists of $d$ layers of parameterized $U_3$ rotations, interleaved by nearest-neighbor CNOT layers, as depicted in~\cref{fig:vqpi} for depth $2$. In all experiments we used a learning rate of $0.01$ for the Adam optimizer~\cite{Kingma_2014} and discount factor $\gamma=0.9$. These hyperparameters were tuned in order to produce robust convergence for different initial conditions. 

For the experiments in this section we resorted to the \texttt{FrozenLake} environment~\cite{Brockman_2016}, using the standard \texttt{4x4} setup shown in~\cref{fig:vqpi}. We introduce stochasticity to the environment dynamics by instead transitioning to perpendicular states with probability $0 \leq \beta \leq \frac{1}{3}$. Comparable setups have frequently been considered through the lens of function approximation-based \gls{qrl}~\cite{Skolik_2022,Dragan_2022}. The $16$-state and $4$-action environment induces a $64 \times 64$-dimensional system matrix of the underlying \gls{lse} following~\cref{eq:bellman_equation_lse}. With an ancilla qubit for evaluating the loss function with the Hadamard test~\cite{Bravo_2023}, the realization requires $7$ qubits in total. An alternative formulation based on the Hadamard-overlap test~\cite{Bravo_2023} requires $13$ qubits, but significantly reduces the number of multi-qubit operations. While this is well within the reach of existing \gls{nisq} hardware, we resort to classical simulation, due to the limited access to these systems. For the variational quantum policy evaluation routine we constructively decompose the real-valued system matrix $A_{\pi}$ into a linear combination of $4$ orthogonal matrices~\cite{Zhan_2001,Li_2002}, satisfying~\cref{eq:unitary_decomposition}. Details on this procedure and a discussion of potential alternatives are deferred to~\cref{app:decomposition}.

In the following, we report the proof-of-principle realization of the proposed routines, starting out with a comparison of the efficiency of both variants in~\cref{subsec:warm_start}. The scalability of the approach is demonstrated in~\cref{subsec:upscaling}. An analysis of the resource requirements is deferred to~\cref{sec:resource_analysis}.


\subsection{\label{subsec:warm_start}Warm-Start Parameter Initialization}

We compare the performance of the \gls{varqpi} algorithm on the \texttt{FrozenLake} environment (random parameter initialization in each iteration) to \gls{wsvarqpi} (parameter initialization after first iteration as described in~\cref{subsec:warm_start}). The circuit depth is $d=12$, with early-stopping after reaching a loss threshold of $0.0001$ -- those hyperparameters will be justified in the following section. Averaged over $100$ random initial policies for a $\beta=0.1$-stochastic \texttt{FrozenLake} environment, the warm-start approach converges within $4.1 \pm 0.9$ iterations, requiring $3943 \pm 952$ training steps. The randomly-initialized variant requires $4.0 \pm 0.9$ iterations with $5663 \pm 1366$ steps. While the negligible difference in iterations can be attributed to statistical variance, the reduction w.r.t.\ required updates by approx. $30\%$ is clearly significant. From~\cref{tab:warm_start} it becomes apparent that this difference is more significant in the later iterations. This can be explained by the resulting greedy policy only changing slightly between two successive cycles of policy evaluation and improvement, enhancing the impact of \gls{wsvarqpi}. One might expect an even clearer reduction in overall training complexity for larger and more complex environments, where more iterations are required for convergence. We also observed similar results for setups with lower and higher stochasticity.

In \cref{fig:warm_start} we explicitly report the losses observed during one instance of \gls{wsvarqpi} compared to randomly-initialized \gls{varqpi}. The warm-start version indeed seems to get initialized to a more desirable part of the optimization landscape, indicated also by the successively reduced initial loss value after policy improvement. For both setups it can be observed, that the loss values start to fluctuate starting from a loss value of about $0.01$ -- equivalent to the learning rate. This variance can be reduced by using learning rate schedulers, with initial experiments suggesting a linear decay to be best suited. However, this introduces additional hyperparameters, which impede a fair comparison of both routines. While the fluctuations did not hinder the convergence in the considered setups, statements regarding scalability require further investigation.

\subsection{\label{subsec:upscaling}Up-Scaling to Larger Environment}

To strengthen the potential of \gls{wsvarqpi}, we execute the algorithm on the $\beta=0.1$-stochastic instance of \texttt{FrozenLake8x8} in~\cref{fig:largeFL}. This $64$-state and $4$-action setup enlarges the \gls{lse} system matrix to a size of $256 \times 256$. This constitutes a $16$-fold increase of the matrix size compared to the \texttt{4x4} setup. However, the spatial resources for the quantum-enhanced realization only slightly increase from $7$ to $9$ qubits for the Hadamard formulation, and from $13$ to $17$ qubits for the Hadamard-overlap formulation.

To additionally account for the larger environment complexity, we increase the circuit depth to $d=24$, while keeping the loss threshold at $0.0001$. Note, however, that this only provides an upper bound on the required resources, and should not be used to infer scaling behavior regarding circuit depth. Within $9$ iterations and overall $82160$ training steps the learned greedy policy solves the environment. This result was re-produced for $10$ random initial policies with only slight deviations in iteration and step number count. While this result does not proof general scalability, it certainly should motivate extending and refining \gls{varqpi} and \gls{wsvarqpi}.

\begin{figure}
    \centering
    \includegraphics[width=0.25\textwidth]{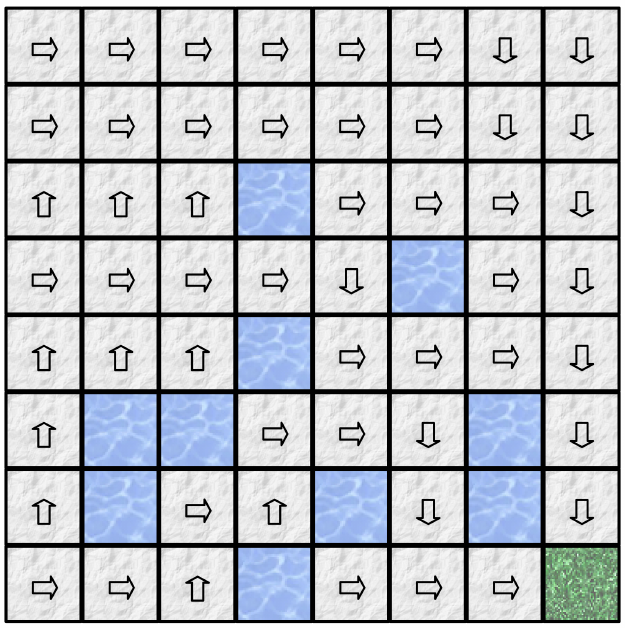}
    \captionsetup{font=footnotesize}
    \caption{The policy found for the \texttt{FrozenLake8x8} environment with $\beta=0.1$-stochasticity using \gls{wsvarqpi}. Training required $82160$ steps in $9$ policy evaluation and improvement cycles. The experiments use a depth $d=24$ ansatz and a loss threshold of $0.0001$. The solution perfectly aligns with the ground truth for this configuration.}
    \label{fig:largeFL}
\end{figure}

\section{\label{sec:resource_analysis}Resource Analysis of Quantum Subroutine}

The spatial resources -- i.e.\ the number of qubits -- for variational quantum policy iteration are determined by the size of the system matrix. More concretely, for size $N := \abs{\mathcal{A}} \times \abs{\mathcal{S}}$, the variational ansatz $\ket{x(\bm{\alpha})}$ has to act on $n := \lceil \log_2 N \rceil$ qubits. For the training procedure described in~\cref{subsec:lse} the overall system size scales as $n+1$ if using the Hadamard test, and as $2n+1$ if the loss function is evaluated with the Hadamard-overlap test~\cite{Bravo_2023}.

In the following, we investigate two other crucial components of the overall algorithmic scaling behavior: In~\cref{subsec:threshold} we demonstrate, that the threshold on the loss function in practice can be selected several orders of magnitude higher than suggested by~\cref{eq:threshold}, allowing for significantly faster convergence. In~\cref{subsec:depth} we analyze the depth of the variational ansatz required for reliably solving the linear system.

\begin{figure}[t]
    \usetikzlibrary{calc}
    \centering
    \begin{tikzpicture}
        \centering
        \begin{axis}[
            name=plot2,
            ylabel=$\textbf{success rate}$,
            xmode=log,
            xtick={0.00002, 0.00005, 0.0001, 0.0002, 0.0005, 0.001, 0.002, 0.005, 0.01, 0.02, 0.05},
            xticklabels={$2$, $5$, $1$, $2$, $5$, $1$, $2$, $5$, $1$, $2$, $5$},
            ytick={0.0, 0.5, 1.0},
            yticklabels={$0\%$,$50\%$,$100\%$},
            xmin=0.000016,xmax=0.065,
            ymin=0.0,ymax=1.1,
            clip=false,
            label style={font=\footnotesize},
            tick label style={font=\footnotesize},
            grid=both,
            axis x line=bottom, axis y line=left,
            width=0.99\linewidth,
            height=4cm,
            legend style={/tikz/every even column/.append style={column sep=0.1cm, row sep=0.1cm},at={(0.3,0.6)},anchor=east,yshift=-5mm,font=\scriptsize}]
            ]

            \addplot[line width=.7pt,solid,color=col_per00,mark=o] %
            	table[x=threshold,y=per0.0,col sep=comma]{figures/data/threshold.csv};

            \addplot[line width=.7pt,solid,color=col_per01,mark=o] %
            	table[x=threshold,y=per0.1,col sep=comma]{figures/data/threshold.csv};

            \addplot[line width=.7pt,solid,color=col_per03,mark=o] %
            	table[x=threshold,y=per0.3,col sep=comma]{figures/data/threshold.csv};

            \legend{$\beta=0.0$,$\beta=0.1$,$\beta=0.3$};

            \draw[decorate,decoration={brace,mirror}]
                ([yshift=-15pt]axis cs:0.000018,0.0) --
                node[below=2pt] {\footnotesize$\cdot10^{-5}$} 
                ([yshift=-15pt]axis cs:0.000055,0.0);

            \draw[decorate,decoration={brace,mirror}]
                ([yshift=-15pt]axis cs:0.00009,0.0) --
                node[below=2pt] {\footnotesize$\cdot10^{-4}$} 
                ([yshift=-15pt]axis cs:0.00055,0.0);

            \draw[decorate,decoration={brace,mirror}]
                ([yshift=-15pt]axis cs:0.0009,0.0) --
                node[below=2pt] {\footnotesize$\cdot10^{-3}$} 
                ([yshift=-15pt]axis cs:0.0055,0.0);

            \draw[decorate,decoration={brace,mirror}]
                ([yshift=-15pt]axis cs:0.009,0.0) --
                node[below=2pt] {\footnotesize$\cdot10^{-2}$} 
                ([yshift=-15pt]axis cs:0.055,0.0);


        \end{axis}
        \begin{axis}[
            name=plot3,
            at={($(plot2.south)+(0,-1.25cm)$)}, anchor=north,
            xlabel=$\textbf{loss threshold}$,
            xlabel shift=18pt,
            ylabel=$\textbf{steps}$,
            xmode=log,
            xtick={0.00002, 0.00005, 0.0001, 0.0002, 0.0005, 0.001, 0.002, 0.005, 0.01, 0.02, 0.05},
            xticklabels={$2$, $5$, $1$, $2$, $5$, $1$, $2$, $5$, $1$, $2$, $5$},
            ytick={0, 1500, 3000},
            yticklabels={$0$,$1500$, $3000$},
            xmin=0.000016,xmax=0.065,
            ymin=0.0,ymax=3750.0,
            clip=false,
            label style={font=\footnotesize},
            tick label style={font=\footnotesize},
            grid=both,
            axis x line=bottom, axis y line=left,
            width=0.99\linewidth,
            height=3.25cm,
            legend style={/tikz/every even column/.append style={column sep=0.1cm, row sep=0.1cm},at={(0.33,0.88)},anchor=east,yshift=-5mm,font=\scriptsize}]
            ]

            \addplot+[draw=none,name path=A,no markers] %
            	table[x=threshold,y=per0.0_plus,col sep=comma]{figures/data/threshold_depth.csv};
            \addplot+[draw=none,name path=B,no markers] %
            	table[x=threshold,y=per0.0_minus,col sep=comma]{figures/data/threshold_depth.csv};
            \addplot[col_per00!40] fill between[of=A and B];
            \addplot[line width=.7pt,solid,color=col_per00,mark=o] %
            	table[x=threshold,y=per0.0,col sep=comma]{figures/data/threshold_depth.csv};

            \addplot+[draw=none,name path=A,no markers] %
            	table[x=threshold,y=per0.1_plus,col sep=comma]{figures/data/threshold_depth.csv};
            \addplot+[draw=none,name path=B,no markers] %
            	table[x=threshold,y=per0.1_minus,col sep=comma]{figures/data/threshold_depth.csv};
            \addplot[col_per01!40] fill between[of=A and B];
            \addplot[line width=.7pt,solid,color=col_per01,mark=o] %
            	table[x=threshold,y=per0.1,col sep=comma]{figures/data/threshold_depth.csv};

            \addplot+[draw=none,name path=A,no markers] %
            	table[x=threshold,y=per0.3_plus,col sep=comma]{figures/data/threshold_depth.csv};
            \addplot+[draw=none,name path=B,no markers] %
            	table[x=threshold,y=per0.3_minus,col sep=comma]{figures/data/threshold_depth.csv};
            \addplot[col_per03!40] fill between[of=A and B];
            \addplot[line width=.7pt,solid,color=col_per03,mark=o] %
            	table[x=threshold,y=per0.3,col sep=comma]{figures/data/threshold_depth.csv};

            \draw[decorate,decoration={brace,mirror}]
                ([yshift=-15pt]axis cs:0.000018,0.0) --
                node[below=2pt] {\footnotesize$\cdot10^{-5}$} 
                ([yshift=-15pt]axis cs:0.000055,0.0);

            \draw[decorate,decoration={brace,mirror}]
                ([yshift=-15pt]axis cs:0.00009,0.0) --
                node[below=2pt] {\footnotesize$\cdot10^{-4}$} 
                ([yshift=-15pt]axis cs:0.00055,0.0);

            \draw[decorate,decoration={brace,mirror}]
                ([yshift=-15pt]axis cs:0.0009,0.0) --
                node[below=2pt] {\footnotesize$\cdot10^{-3}$} 
                ([yshift=-15pt]axis cs:0.0055,0.0);

            \draw[decorate,decoration={brace,mirror}]
                ([yshift=-15pt]axis cs:0.009,0.0) --
                node[below=2pt] {\footnotesize$\cdot10^{-2}$} 
                ([yshift=-15pt]axis cs:0.055,0.0);

        \end{axis}
    
        \end{tikzpicture}
    \captionsetup{font=footnotesize}
    \caption{\label{fig:threshold}Success rates of variational quantum policy evaluation with a depth $12$ circuit for different loss thresholds and therefore required training steps. The results are averaged over $1000$ random instances of \texttt{FrozenLake} with varying stochasticity $\beta$ -- the bands in the lower plot denoting standard deviations. Training is terminated once the loss value $C_{G}$ has declined below the given threshold. The success rate denotes the percentage of runs for which \cref{eq:greedy_action_selection} agrees with the ground truth. To account for sampling noise, we tolerate a deviation of $0.001$ in the associated $Q$-values.}
\end{figure}

\subsection{\label{subsec:threshold}Loss Threshold and Environment Stochasticity}

As the prepared quantum state $\ket{x(\bm{\alpha})}$ is used for greedy policy improvement following~\cref{eq:greedy_policy}, we do not need a perfectly accurate solution to the underlying \gls{lse}. It is enough, if the resulting greedy policy aligns with the ground truth -- on all non-terminal states. The operational meaning of the loss function from~\cref{eq:threshold} is proportional to the desired deviation $\epsilon$, and inverse proportional to the condition number $\kappa$~\cite{Bravo_2023}.

We select the tolerated deviation to be $0.001$, motivated by the number of shots for approximating $M(s,a)$ being $1000$. For the deterministic 
\texttt{FrozenLake} environment the system matrix for a random initial policy has a condition number of $85.6$, while the $\beta=0.3$-stochastic version exhibits a condition number of $67.4$. It is demonstrated in~\cref{app:condition_number}, that the condition numbers for more generic \gls{rl} environments are in a similar range. Following~\cref{eq:threshold}, the loss thresholds are $C_G \geq \frac{0.001^2}{85.6^2} \approx 1.36 \cdot 10^{-12}$ and $C_G \geq \frac{0.001^2}{67.4^2} \approx 2.20 \cdot 10^{-12}$, respectively. In principle, this this should allow for a comparable accuracy with an about $0.6$ times lower threshold bound for the stochastic version. However, in practice the deterministic environment is easier to solve, as we will see in the following.


In \cref{fig:threshold} we demonstrate, that those bounds are pretty loose for quantum policy iteration, which converges w.r.t.\ the $\ell_{\infty}$ norm. One cycle of quantum policy evaluation and successive policy improvement is successful, if the greedy policy aligns with the ground truth on all non-terminal states. We interpret the overall algorithmic setup to be reliable, if the condition holds for $95\%$ -- i.e.\ two standard deviations -- of all random initial policies. Variational quantum policy evaluation works reliably in deterministic \texttt{FrozenLake} environments for loss thresholds up to $0.05$, improving upon the theoretical bound established above by $11$ orders of magnitude. For stochastic environments a threshold value of about $0.0001$ is required for stable results, still constituting an improvement by about $8$ orders of magnitude. It is currently unclear if these practical improvements transfer to the results from ~\cref{app:condition_number}, where we considered the general scalability of quantum-enhanced \gls{lse} solvers for policy iteration. As indicated in the lower part of~\cref{fig:threshold}, the smaller loss thresholds required for the stochastic environments are partially compensated by a faster convergence compared to deterministic setups.

\subsection{\label{subsec:depth}Depth of Variational Ansatz}

The depth of the involved circuits mainly depends on two factors: The variational ansatz $\ket{x(\bm{\alpha})}$ and the explicit realization of the unitary decomposition following~\cref{eq:unitary_decomposition}. The latter one can be improved by developing sophisticated matrix decomposition and unitary synthesis tools, which is not part of this work, but is superficially discussed in~\cref{app:decomposition}. The former one will be analyzed in the following.

\begin{figure}[t]
    \pgfplotsset{scaled y ticks=false}
    \usetikzlibrary{calc}
    \centering
    \begin{tikzpicture}
        \centering
        \begin{axis}[
            name=plot2,
            ylabel=$\textbf{success rate}$,
            xtick={2,4,6,8,10,12,14},
            ytick={0.0, 0.5, 1.0},
            yticklabels={$0\%$,$50\%$,$100\%$},
            xmin=1.3,xmax=14.2,
            ymin=0.0,ymax=1.1,
            label style={font=\footnotesize},
            tick label style={font=\footnotesize},
            grid=both,
            axis x line=bottom, axis y line=left,
            width=0.99\linewidth,
            height=4cm,
            legend style={/tikz/every even column/.append style={column sep=0.1cm, row sep=0.1cm},at={(0.96,0.67)},anchor=east,yshift=-5mm,font=\scriptsize}]
            ]

            \addplot[line width=.7pt,solid,color=col_per00,mark=o] %
            	table[x=depth,y=validated,col sep=comma]{figures/data/resources_per=0_0.csv};

            \addplot[line width=.7pt,solid,color=col_per01,mark=o] %
            	table[x=depth,y=validated,col sep=comma]{figures/data/resources_per=0_1.csv};


            \legend{$\beta=0.0$,$\beta=0.1$};


        \end{axis}
        \begin{axis}[
            name=plot3,
            at={($(plot2.south)+(0,-0.75cm)$)}, anchor=north,
            xlabel=$\textbf{depth}$,
            xtick={2,4,6,8,10,12,14},
            ylabel=$\textbf{steps}$,
            ytick={0, 4000, 8000},
            yticklabels={$0$,$4000$,$8000$},
            xmin=1.3,xmax=14.2,
            ymin=0.0,ymax=12000.0,
            label style={font=\footnotesize},
            tick label style={font=\footnotesize},
            grid=both,
            axis x line=bottom, axis y line=left,
            width=0.99\linewidth,
            height=4cm,
            legend style={/tikz/every even column/.append style={column sep=0.1cm, row sep=0.1cm},at={(0.33,0.88)},anchor=east,yshift=-5mm,font=\scriptsize}]
            ]

            \addplot+[draw=none,name path=A,no markers] %
            	table[x=depth,y=raw_steps_plus,col sep=comma]{figures/data/resources_per=0_0.csv};
            \addplot+[draw=none,name path=B,no markers] %
            	table[x=depth,y=raw_steps_minus,col sep=comma]{figures/data/resources_per=0_0.csv};
            \addplot[col_per00!40] fill between[of=A and B];
            \addplot[line width=.7pt,solid,color=col_per00,mark=o] %
            	table[x=depth,y=raw_steps,col sep=comma]{figures/data/resources_per=0_0.csv};

            \node at (axis cs:2.07,1800)[color=col_per00,anchor=center] {\scriptsize $49.1\%$};
            \node at (axis cs:4.07,900)[color=col_per00,anchor=center] {\scriptsize $70.7\%$};
            \node at (axis cs:6.07,4200)[color=col_per00,anchor=center] {\scriptsize $29.5\%$};
            \node at (axis cs:8.07,11200)[color=col_per00,anchor=center] {\scriptsize $13.9\%$};

            \addplot+[draw=none,name path=A,no markers] %
            	table[x=depth,y=raw_steps_plus,col sep=comma]{figures/data/resources_per=0_1.csv};
            \addplot+[draw=none,name path=B,no markers] %
            	table[x=depth,y=raw_steps_minus,col sep=comma]{figures/data/resources_per=0_1.csv};
            \addplot[col_per01!40] fill between[of=A and B];
            \addplot[line width=.7pt,solid,color=col_per01,mark=o] %
            	table[x=depth,y=raw_steps,col sep=comma]{figures/data/resources_per=0_1.csv};

            \node at (axis cs:2.07,11400)[color=col_per01,anchor=center] {\scriptsize $11.7\%$};
            \node at (axis cs:4.07,10700)[color=col_per01,anchor=center] {\scriptsize $30.7\%$};
            \node at (axis cs:6.07,11100)[color=col_per01,anchor=center] {\scriptsize $23.1\%$};
            \node at (axis cs:8.07,6200)[color=col_per01,anchor=center] {\scriptsize $33.0\%$};

        \end{axis}
    
        \end{tikzpicture}
    \captionsetup{font=footnotesize}
    \caption{\label{fig:resources}Success rates of variational quantum policy evaluation for different circuit depths and therefore required training steps. The results are averaged over $1000$ random instances of \texttt{FrozenLake} with varying stochasticity $\beta$ -- the bands in the lower plot denoting standard deviations. The success rates denotes the percentage of runs for which \cref{eq:greedy_action_selection} agrees with the ground truth. To account for sampling noise, we tolerate a deviation of $0.001$ in the associated $Q$-values. The percentages in the lower plot denote the ration of runs below $100\%$ that achieved the loss threshold $0.0001$ with less then $10000$ steps, after which training is aborted.}
\end{figure}

As before, we select the variational circuit ansatz on $n$ qubits with $d \geq 1$ layers as 
\begin{align}
    \ket{x(\bm{\alpha})} &= \underbrace{U_{\bm{\alpha}_{d-1}} U_{\mathrm{CX}}}_{\text{layer } d-1} \cdots \underbrace{U_{\bm{\alpha}_1} U_{\mathrm{CX}}}_{\text{layer } 1} \underbrace{U_{\bm{\alpha}_0}}_{\text{layer } 0}, \\
    \text{with} &\phantom{=} U_{\bm{\alpha}_i} = U_3(\bm{\alpha}_{i,0}) \otimes \cdots \otimes U_3(\bm{\alpha}_{i,n-1}), \\
    &\phantom{=} U_{\mathrm{CX}} = \mathrm{CX}_{0,1} \cdots \mathrm{CX}_{n-2,n-1} \mathrm{CX}_{n-1,0},
\end{align}
as also depicted in~\cref{fig:vqpi}. In order to determine the actually required number of layers, we report experiments for a varying depth in~\cref{fig:resources}. Again it becomes apparent that the deterministic \texttt{FrozenLake} environment seems to be easier to solve, as for a given depth the probability that the approximated and ground truth greedy actions align is higher. In general, a more expressive circuit allows for a higher success rate, with reliable accuracy achieved from depth $d=10$ upwards. Interestingly, this does not consistently hold true for the convergence rate. This measure quantifies the likelihood that the respective ansatz actually allows convergence to the loss threshold $0.0001$ in $10000$ steps. As indicated in the lower plot, for setups with $8$ qubits or less there is no consistent pattern. This can be interpreted as a motivation to investigate ans\"atze that exploit some structure of the \gls{lse} to reduce resource requirements. For deeper circuits this fluctuation disappears with all instances converging. As the required steps for convergence are reduced with increasing circuit depth, we identified $d=12$ to provide a good balance regarding resource usage.



\glsresetall

\section{\label{sec:conclusion}Conclusion}

In this work we introduced \gls{varqpi}, an algorithm combining quantum-enhanced policy evaluation with classical policy improvement. It uses a variational LSE solver to perform policy evaluation potentially more efficiently than possible with classical solution methods. The greedy policy improvement step is based on efficient $\ell_{\infty}$-tomography, avoiding the typical bottleneck of exponential sampling complexity for retrieving quantum states.

We demonstrated that the system matrices induced by policy iteration are compatible with requirements imposed by the quantum sub-routines. Specifically, we showed that generic environments satisfy strict sparsity conditions and the associated LSEs are well-conditioned. Both results support the scalability of the approach, which is eminent for potential quantum advantage. One existing bottleneck is the unitary decomposition and synthesis of the matrices, which is left for future work.

The efficiency of the routine was empirically demonstrated on different configurations of the \texttt{FrozenLake} environment. We discovered that a version with warm-start parameter initialization (WS-VarQPI) can significantly improve the convergence behavior. We conducted an analysis of different hyperparameters settings to support the robustness of the approach.

We are confident that (WS-)\gls{varqpi} opens the way for a promising line of research on \gls{nisq}-compatible quantum reinforcement learning. The successful application to a complex \texttt{FrozenLake8x8} environment underlines the potential of the approach. Future research could include the extension to model-free scenarios and validation experiments on quantum hardware.

\section*{Code Availability}

An implementation of \gls{varqpi} and WS-VarQPI, including the considered environments, is available in the repository \url{https://github.com/nicomeyer96/ws-varqpi}. Usage details and instructions on how to reproduce the main results of this paper can be found on the \texttt{README} file. Further information and data is available upon reasonable request.


\appendix

Both fault-tolerant and \gls{nisq}-compatible \gls{lse} solvers promise speed-ups only under certain restrictions. Two conditions are sparsity and low condition number of the system matrix $A$~\cite{Childs_2017,Cherrat_2023}. While the former is not explicitly stated in the error bounds of the variational \gls{lse} solver, the scaling and matrix decomposition results implicitly assume sparsity~\cite{Bravo_2023}. We demonstrate in the following, that the linear systems associated with typical \gls{rl} setups are well-behaved regarding these conditions.

\subsection{\label{app:sparsity}Sparsity for Local Dynamics}

Quantum linear system solvers assume sparsity of the $N \times N$ system matrix $A$, with only $\mathcal{O}(N \cdot \mathrm{polylog}(N))$ non-zero entries~\cite{Childs_2017,Bravo_2023}. A stronger assumption is that each of the $N$ rows of $A$ has at most $\mathcal{O}(\mathrm{polylog}(N)))$ non-zero elements.

For state set $\mathcal{S}$, action set $\mathcal{A}$, and dynamics $p$, we define the set of states that can be reached from $s$ under action $a$ as
\begin{align}
    \label{eq:local_neighborhood}
    \mathcal{S}_{s,a} = \left\{ s' \mid p(s'|s,a) > 0 \right\},
\end{align}
and the set of states that can transition to $s'$ under action $a$ as
\begin{align}
    \label{eq:local_neighborhood_reverse}
    \bar{\mathcal{S}}_{s',a} = \left\{ s \mid p(s'|s,a) > 0 \right\}.
\end{align}

\begin{definition}[local dynamics]
    \label{def:local_dynamics}
    An environment with state set $\mathcal{S}$ and action set $\mathcal{A}$ has local dynamics, if it holds
    \begin{align}
        \abs{\mathcal{S}_{s,a}} &\leq \log_2(N), \\ \abs{\bar{\mathcal{S}}_{s',a}} &\leq \log_2(N)
    \end{align}
    for all $s, s' \in \mathcal{S}$, $a \in \mathcal{A}$, with $N:=\abs{\mathcal{S}}$.
\end{definition}
\noindent These assumptions are reasonable in the context of \gls{rl}, as transition probabilities typically decrease with (spatial) distance between states. As in \gls{qpi} the policy is deterministic, the policy matrix defined in~\cref{eq:bellman_equation_matrix} has exactly one non-zero entry in each row. Consequently, $P_{\pi} = P \Pi$ is row-stochastic with at most $\log_2(N)$ non-zero entries per row and column for~\cref{def:local_dynamics}. The system matrix $A_{\pi} = \mathbb{I} - \gamma P_{\pi}$ contains at most one additional non-zero element in each row, not changing the overall scaling. Therefore, locally-dynamic environments satisfy the sparsity condition posed by quantum-enhanced \gls{lse} solvers stated above. Additionally, this guaranteed sparsity is a desirable property for allowing an efficient unitary decomposition of $A_{\pi}$, as further elaborated in~\cref{app:decomposition}.

Depending on the concrete dynamics function $p$, one can consider special cases of environments satisfying~\cref{def:local_dynamics}:

\begin{definition}[deterministic dynamics]
    \label{def:deterministic_dynamics}
    An environment with state set $\mathcal{S}$ and action set $\mathcal{A}$ has deterministic dynamics, if
    \begin{align}
        \abs{\mathcal{S}_{s,a}} &= 1, \\
        \abs{\bar{\mathcal{S}}_{s',a}} &\leq \log_2(N)
    \end{align}
    for all $s,s' \in \mathcal{S}$, and $a \in \mathcal{A}$. Consequently, for all $s \in \mathcal{S}$, $a \in \mathcal{A}$ there is only one $s' \in \mathcal{S}$ such that
    \begin{equation}
        p(s'|s,a) = 1,
    \end{equation}
    while for all other $s' \in \mathcal{S}$ it holds $p(s'|s,a) = 0$.
\end{definition}

\begin{definition}[uniform local dynamics]
    \label{def:uniform_local_dynamics}
    An environment with state set $\mathcal{S}$ and action set $\mathcal{A}$ has uniform local dynamics, if
    \begin{align}
        \abs{\mathcal{S}_{s,a}} = \log_2(N), \\
        \abs{\bar{\mathcal{S}}_{s',a}} \leq \log_2(N)
    \end{align}
    for all $s, s' \in \mathcal{S}$, $a \in \mathcal{A}$, with $N = \abs{\mathcal{S}}$. Furthermore, for all $s \in \mathcal{S}$, $a \in \mathcal{A}$ there are $\log_2(N)$ different $s' \in \mathcal{S}$ such that 
    \begin{equation}
        p(s'|s,a) = \frac{1}{\log_2(N)},
    \end{equation}
    while for all other $s' \in \mathcal{S}$ it holds $p(s'|s,a) = 0$.
\end{definition}

\begin{definition}[exponential local dynamics]
    \label{def:exponential_local_dynamics}
    An environment with state set $\mathcal{S}$ and action set $\mathcal{A}$ has exponential local dynamics, if
    \begin{align}
        \abs{\mathcal{S}_{s,a}} &= \log_2(N), \\
        \abs{\bar{\mathcal{S}}_{s',a}} &\leq \log_2(N)
    \end{align}
    for all $s \in \mathcal{S}$, $a \in \mathcal{A}$, with $N = \abs{\mathcal{S}}$. Furthermore, for all $s \in \mathcal{S}$, $a \in \mathcal{A}$ there are $\log_2(N)$ different $s' \in \mathcal{S}$ such that 
    \begin{equation}
        p(s'|s,a) = \frac{\exp \left(\left( 1 - \nicefrac{1}{\beta} \right) d(s,s') \right)}{\sum_{s'' \in \mathcal{S}_{s,a}}  \exp \left( \left( 1 - \nicefrac{1}{\beta} \right) d(s,s'') \right)},
    \end{equation}
    with perturbation parameter $\beta \in (0,1]$ and some distance measure $d: \mathcal{S} \times \mathcal{S} \mapsto \mathbb{R}_+$, while for all other $s' \in \mathcal{S}$ it holds $p(s'|s,a) = 0$.
\end{definition}

\noindent The \emph{exponential local dynamics} in~\cref{def:exponential_local_dynamics} assume, that the transition probabilities vanish exponentially within a logarithmic local neighborhood. The \emph{deterministic dynamics} in~\cref{def:deterministic_dynamics} are a special case with $\beta \to 0$. Furthermore, the \emph{uniform local dynamics} from~\cref{def:uniform_local_dynamics} describe the opposite extreme with $\beta = 1$. While all environments with dynamics following~\cref{def:local_dynamics,def:deterministic_dynamics,def:uniform_local_dynamics,def:exponential_local_dynamics} satisfy the imposed sparsity constraints, guarantees w.r.t.\ the condition number vary, as will be discussed in the following.

\subsection{\label{app:condition_number}Bounds on Condition Number}
Quantum-enhanced \gls{lse} solvers require a small condition number $\kappa = \frac{\sigma_{\mathrm{max}}}{\sigma_{\mathrm{min}}}$ of the system matrix $A_{\pi}$~\cite{Childs_2017,Cherrat_2023}, where $\sigma_{\mathrm{max}}$ and $\sigma_{\mathrm{min}}$ denote the largest and smallest singular values, respectively. This can be seen in the threshold bounds from~\cref{eq:threshold}, where $\kappa$ enters in the denominator. We demonstrate that the \gls{lse} associated with typical \gls{rl} setups are well-conditioned, supporting the feasibility of larger-scale \gls{qpi}. 

We need to mention, that the analytical bound to the condition number derived in Cherrat et al.~\cite{Cherrat_2023} does not hold for generic \gls{rl} setups. The authors state, that for $A_{\pi} := \mathbb{I} - \gamma P_{\pi}$ the condition number $\kappa(A_{\pi})$ is upper-bounded by $\frac{1+\gamma}{1-\gamma}$ (see Theorem 7 of~\cite{Cherrat_2023}). However, the proof assumes that $\| P_{\pi} \| = 1$, which only holds for doubly-stochastic $P_{\pi}$~\cite{Horn_2012}. The matrices $P_{\pi}$ of typical \glspl{mdp} in general are only row-stochastic, not allowing this assumption. 

In the following, we derive corrected bounds for the general setup, empirically show that the systems are well-conditioned, and re-discover the bounds stated by Cherrat et al. for a special case. Therefore, we first state and prove some helpful properties of matrix norms:

\begin{proposition}
    \label{prop:row_norm_inverse}
    For a square matrix $A \in \mathbb{C}^{N \times N}$ it holds
    \begin{align}
        \| A^{-1} \|_{\infty} \leq \frac{1}{\min_i \lbrace \abs{A_{ii}} - \sum_{i \neq j} \abs{A_{ij}} \rbrace },
    \end{align}
    assuming the denominator on the right side is positive.
\end{proposition}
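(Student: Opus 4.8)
The plan is to bound $\|A^{-1}\|_{\infty}$ by showing that $A$ is strictly diagonally dominant (the positivity of the denominator is exactly this assumption), and then use the standard fact that the induced $\infty$-norm of a matrix equals its maximum absolute row sum. Concretely, for any vector $\bm{x}$ with $\|\bm{x}\|_{\infty} = 1$, pick an index $i$ where $\abs{x_i} = 1$, and look at the $i$-th component of $A\bm{x}$. By the triangle inequality, $\abs{(A\bm{x})_i} \geq \abs{A_{ii} x_i} - \sum_{j \neq i} \abs{A_{ij} x_j} \geq \abs{A_{ii}} - \sum_{j \neq i} \abs{A_{ij}} \geq \min_i \{\abs{A_{ii}} - \sum_{j \neq i}\abs{A_{ij}}\} =: \delta > 0$. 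Hence $\|A\bm{x}\|_{\infty} \geq \delta$ for every unit-$\infty$-norm $\bm{x}$, which in particular shows $A$ is invertible (the map $\bm{x} \mapsto A\bm{x}$ has trivial kernel).

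Next I would translate this lower bound on $\|A\bm{x}\|_{\infty}$ into an upper bound on $\|A^{-1}\|_{\infty}$. For arbitrary nonzero $\bm{y}$, set $\bm{x} = A^{-1}\bm{y}/\|A^{-1}\bm{y}\|_{\infty}$, which has unit $\infty$-norm; then $\|\bm{y}\|_{\infty} = \|A\bm{x}\|_{\infty} \cdot \|A^{-1}\bm{y}\|_{\infty} \geq \delta \|A^{-1}\bm{y}\|_{\infty}$. Rearranging gives $\|A^{-1}\bm{y}\|_{\infty} \leq \frac{1}{\delta}\|\bm{y}\|_{\infty}$ for all $\bm{y}$, and taking the supremum over $\|\bm{y}\|_{\infty} = 1$ yields $\|A^{-1}\|_{\infty} \leq \frac{1}{\delta}$, which is precisely the claimed bound.

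I expect this proof to be essentially routine — it is the classical argument underlying the Ahlberg–Nilson / Varah bound for strictly diagonally dominant matrices, and no part of it is delicate. The only point requiring a little care is making sure the index $i$ chosen to realize $\abs{x_i} = 1$ is used consistently when applying the triangle inequality (the diagonal term must use the \emph{same} row whose off-diagonal sum is being subtracted), and that one takes the minimum over rows at the end rather than a row-dependent bound. There is no real obstacle; the statement is a self-contained linear-algebra lemma whose purpose is to feed into the subsequent condition-number estimates for $A_{\pi} = \mathbb{I} - \gamma P_{\pi}$, where strict diagonal dominance follows from $\gamma < 1$ together with row-stochasticity of $P_{\pi}$.
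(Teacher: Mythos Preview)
Your proposal is correct and follows essentially the same approach as the paper: both establish the inequality $\|A\bm{x}\|_{\infty} \geq \delta\,\|\bm{x}\|_{\infty}$ by looking at the row where $\abs{x_i}$ is maximal and applying the reverse triangle inequality. The paper phrases this via contraposition (assume $\|v\|_{\infty} > 1$ and derive $\|Av\|_{\infty} > \delta$), whereas you give the direct form; your version is arguably cleaner because you explicitly select the index realizing the maximum, a detail the paper leaves implicit.
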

\begin{proof}
    Let us define $w = Av$, with column vectors $v, w \in \mathbb{C}^{N}$. By following the definition of an arbitrary operator norm as $\| A \| = \sup \lbrace \| Av \|: \| v \| = 1 \rbrace$ we need to show
    \begin{align}
        &\| w \|_{\infty} \leq 1 \Rightarrow \| v \|_{\infty} \leq \frac{1}{\min_i \lbrace \abs{A_{ii}} - \sum_{i \neq j} \abs{A_{ij}} \rbrace} \\
        \Leftrightarrow~ &\| w \|_{\infty} \leq \min\nolimits_i \lbrace \abs{A_{ii}} - \sum\nolimits_{i \neq j} \abs{A_{ij}} \rbrace \Rightarrow \| v \|_{\infty} \leq 1,
    \end{align}
    with a stronger statement reading as
    \begin{align}
        \label{eq:axis_aligned_condition}
        \forall i,j: ~\abs{w_i} \leq \abs{A_{ii}} - \sum\nolimits_{i \neq j} \abs{A_{ij}} \Rightarrow \forall i: ~\abs{v_i} \leq 1.
    \end{align}
    Let us assume the contrary, i.e.\ it exists an index $i$, such that $\abs{v_i} > 1$. Then it holds
    \begin{align}
        \abs{w_i} &= \left| A_{ii}v_i - \sum\nolimits_{i \neq j} A_{ij}(-v_j) \right| \\
        & \geq |A_{ii}||v_i| - \sum\nolimits_{i \neq j} |A_{ij}||v_j| \\
        & > |A_{ii}| - \sum\nolimits_{i \neq j} |A_{ik}|,
    \end{align}
    contradicting the condition from~\cref{eq:axis_aligned_condition}, and thereby proving the original statement by contraposition.
\end{proof}

\begin{proposition}
    \label{prop:col_norm_inverse}
    For a square matrix $A \in \mathbb{C}^{N \times N}$ it holds
    \begin{align}
        \| A^{-1} \|_{1} \leq \frac{1}{\min_i \lbrace \abs{A_{ii}} - \sum_{i \neq j} \abs{A_{ji}} \rbrace },
    \end{align}
    assuming the denominator on the right side is positive.
\end{proposition}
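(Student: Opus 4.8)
The plan is to reduce \cref{prop:col_norm_inverse} to \cref{prop:row_norm_inverse} by transposition. Recall that the induced matrix $1$-norm is the maximum absolute column sum, whereas the induced $\infty$-norm is the maximum absolute row sum; consequently, for any square matrix $M$ one has $\| M \|_1 = \| M^{\top} \|_{\infty}$. Applying this identity to $M = A^{-1}$, together with the elementary fact $(A^{-1})^{\top} = (A^{\top})^{-1}$, gives $\| A^{-1} \|_1 = \| (A^{\top})^{-1} \|_{\infty}$.

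Next I would invoke \cref{prop:row_norm_inverse} with the matrix $B := A^{\top}$ playing the role of $A$. Its diagonal entries satisfy $B_{ii} = A_{ii}$, and its off-diagonal entries in row $i$ are $B_{ij} = A_{ji}$ for $j \neq i$. Hence the quantity governing that bound becomes $\abs{B_{ii}} - \sum_{i \neq j} \abs{B_{ij}} = \abs{A_{ii}} - \sum_{i \neq j} \abs{A_{ji}}$, which is exactly the expression appearing in the denominator of the claimed inequality; moreover the positivity hypothesis on that denominator is precisely the hypothesis needed to apply \cref{prop:row_norm_inverse} to $B$. Chaining the two steps yields $\| A^{-1} \|_1 = \| (A^{\top})^{-1} \|_{\infty} \leq 1 / \min_i \lbrace \abs{A_{ii}} - \sum_{i \neq j} \abs{A_{ji}} \rbrace$, as desired.

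I do not expect any real obstacle here: the only points worth double-checking are the two standard facts $\| M \|_1 = \| M^{\top} \|_{\infty}$ and $(A^{-1})^{\top} = (A^{\top})^{-1}$, both of which are immediate. If a self-contained argument were preferred instead, one could replay the contrapositive argument from the proof of \cref{prop:row_norm_inverse}, this time writing $w = Av$, bounding $\| v \|_1$ by summing the coordinate inequalities and swapping the order of summation so that the column sums $\sum_{i \neq j}\abs{A_{ji}}$ emerge; this also works but is strictly more laborious than the transpose reduction, so I would present the latter.
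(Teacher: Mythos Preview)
Your proposal is correct and essentially matches the paper's own proof: the paper simply writes $w = vA$ with row vectors and declares the argument equivalent to that of \cref{prop:row_norm_inverse}, which is precisely your transpose reduction phrased in terms of left multiplication. Your formulation via $\|M\|_1 = \|M^{\top}\|_{\infty}$ and $(A^{-1})^{\top} = (A^{\top})^{-1}$ makes the reduction slightly more explicit, but the underlying idea is the same.
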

\begin{proof}
    Let us define $w = vA$, with row vectors $v, w \in \mathbb{C}^{N}$. The proof is equivalent to the proof of~\cref{prop:row_norm_inverse}.
\end{proof}
\noindent We prove an upper bound to the condition number of environments with general local dynamics:
\begin{theorem}
    \label{thm:general_local}
    Let $A_{\pi} := \mathbb{I} - \gamma P_{\pi}$ be the $N \times N$-dimensional system matrix of a \gls{lse} with underlying local dynamics following~\cref{def:local_dynamics}. The condition number of the matrix is upper bounded by
    \begin{align}
        \kappa(A_{\pi}) \leq \sqrt{N + \gamma \cdot N \log_2(N)} \cdot \frac{\sqrt{1 + \gamma}}{1 - \gamma}.
    \end{align}
\end{theorem}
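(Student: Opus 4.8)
The plan is to control $\sigma_{\max}(A_\pi) = \|A_\pi\|_2$ and $\sigma_{\min}(A_\pi) = 1/\|A_\pi^{-1}\|_2$ separately, exploiting that (as established in~\cref{app:sparsity} for~\cref{def:local_dynamics}) the matrix $P_\pi = P\Pi$ is row-stochastic with non-negative entries, row sums equal to $1$, and at most $\log_2 N$ non-zero entries in every row \emph{and} every column. Beyond this the only tools needed are the elementary norm comparisons $\|M\|_2 \le \sqrt{\|M\|_1\,\|M\|_\infty}$ and $\|M\|_2 \le \sqrt{N}\,\|M\|_\infty$ for $M\in\mathbb{C}^{N\times N}$, together with~\cref{prop:row_norm_inverse}.

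\emph{Upper bound on $\sigma_{\max}$.} Writing $A_{\pi,ij} = \delta_{ij} - \gamma P_{\pi,ij}$ and using $0 \le \gamma P_{\pi,ii} \le \gamma < 1$, the sum of absolute values along row $i$ is $1 - \gamma P_{\pi,ii} + \gamma\sum_{j\ne i} P_{\pi,ij} \le 1 + \gamma\sum_j P_{\pi,ij} = 1+\gamma$, so $\|A_\pi\|_\infty \le 1+\gamma$. Along column $i$ the same quantity equals $1 - \gamma P_{\pi,ii} + \gamma\sum_{j\ne i} P_{\pi,ji} \le 1 + \gamma\sum_j P_{\pi,ji}$, and since column $i$ of $P_\pi$ has at most $\log_2 N$ non-zero entries in $[0,1]$, this is at most $1+\gamma\log_2 N$; hence $\|A_\pi\|_1 \le 1+\gamma\log_2 N$. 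Combining, $\sigma_{\max}(A_\pi) \le \sqrt{(1+\gamma)(1+\gamma\log_2 N)}$.

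\emph{Lower bound on $\sigma_{\min}$.} Apply~\cref{prop:row_norm_inverse} to $A_\pi$: because $P_\pi$ is row-stochastic the diagonal-dominance gap is uniform, $|A_{\pi,ii}| - \sum_{j\ne i}|A_{\pi,ij}| = (1-\gamma P_{\pi,ii}) - \gamma(1-P_{\pi,ii}) = 1-\gamma > 0$ for every $i$. Thus $\|A_\pi^{-1}\|_\infty \le 1/(1-\gamma)$, and therefore $\|A_\pi^{-1}\|_2 \le \sqrt{N}\,\|A_\pi^{-1}\|_\infty \le \sqrt{N}/(1-\gamma)$, i.e. $\sigma_{\min}(A_\pi) \ge (1-\gamma)/\sqrt{N}$. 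Dividing the two estimates yields $\kappa(A_\pi) \le \sqrt{(1+\gamma)(1+\gamma\log_2 N)}\cdot\sqrt{N}/(1-\gamma)$, which is $\sqrt{N+\gamma N\log_2 N}\cdot\sqrt{1+\gamma}/(1-\gamma)$ after regrouping the factors.

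The main obstacle is the row/column asymmetry of $P_\pi$: it is only row-stochastic, so $A_\pi$ is row- but not column-diagonally dominant, and~\cref{prop:col_norm_inverse} cannot be invoked for $A_\pi$ (its denominator $1-\gamma\sum_j P_{\pi,ji}$ may fail to be positive once a column sum of $P_\pi$ approaches $\log_2 N$). Consequently the $\ell_\infty$ bound on $\|A_\pi^{-1}\|$ cannot be paired with a matching $\ell_1$ bound, forcing the cruder step $\|A_\pi^{-1}\|_2 \le \sqrt{N}\,\|A_\pi^{-1}\|_\infty$ and producing the $\sqrt{N}$ factor; the same asymmetry is why $\|A_\pi\|_1$ grows like $\log_2 N$ while $\|A_\pi\|_\infty$ stays below $1+\gamma$. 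I would close by noting that if $P_\pi$ is in addition column-stochastic — e.g.\ for uniform local dynamics — then~\cref{prop:col_norm_inverse} does apply, all column sums collapse to $1$, the $\sqrt{N}$ and $\log_2 N$ both disappear, and one recovers the $\tfrac{1+\gamma}{1-\gamma}$-type bound.
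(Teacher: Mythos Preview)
Your proof is correct and follows essentially the same route as the paper: bound $\sigma_{\max}$ via $\|A_\pi\|_2 \le \sqrt{\|A_\pi\|_1\|A_\pi\|_\infty}$ with $\|A_\pi\|_\infty \le 1+\gamma$ and $\|A_\pi\|_1 \le 1+\gamma\log_2 N$, then bound $\sigma_{\min}$ by combining \cref{prop:row_norm_inverse} with $\|A_\pi^{-1}\|_2 \le \sqrt{N}\,\|A_\pi^{-1}\|_\infty$. Your closing remark about why \cref{prop:col_norm_inverse} cannot be applied here---and how column-stochasticity recovers the $\tfrac{1+\gamma}{1-\gamma}$ bound---also mirrors the paper's discussion leading into \cref{thm:uniform_local}.
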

\begin{proof}
    We first show an upper bound to the largest singular value, using H\"older's inequality:
    \begin{align}
        \sigma_{max}(A_{\pi}) &\leq \sqrt{\| A_{\pi} \|_{\infty}} \sqrt{\| A_{\pi} \|_{1}} \\
        &\leq \sqrt{1 + \gamma} \sqrt{1 + \gamma \cdot \log_2(N)}
    \end{align}
    The bound for the maximum row sum follows directly from the row-stochasticity of $P_{\pi}$, while the bound for the maximum column sum additionally employs the logarithmically restricted locality of the dynamics underlying $P_{\pi}$.

    \noindent Next, we show a lower bound to the smallest singular value, using the Cauchy-Schwarz inequality and~\cref{prop:row_norm_inverse}:
    \begin{align}
        \label{eq:estimation_sigma_min}
        \sigma_{\min}(A_{\pi}) = \frac{1}{\| A_{\pi}^{-1} \|}
        &\geq \frac{1}{\sqrt{N} \| A_{\pi}^{-1} \|_{\infty}} \\
        &\geq \frac{1}{\sqrt{N} \frac{1}{1 - \gamma}}
    \end{align} 

    \noindent Plugging these results into the definition of the condition number reveals the stated bound.
\end{proof}
\noindent Assuming environments with uniform local dynamics, it is possible to derive a much tighter bound:
\begin{theorem}
    \label{thm:uniform_local}
    Let $A_{\pi} := \mathbb{I} - \gamma P_{\pi}$ be the $N \times N$-dimensional system matrix of a \gls{lse} with underlying uniform local dynamics following~\cref{def:uniform_local_dynamics}. The condition number of the matrix is upper bounded by
    \begin{align}
        \kappa(A_{\pi}) \leq \frac{1+\gamma}{1-\gamma}.
    \end{align}
\end{theorem}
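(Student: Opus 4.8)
The plan is to sharpen both singular-value estimates from the proof of \cref{thm:general_local} by exploiting the extra rigidity of \cref{def:uniform_local_dynamics}, which pins the operator $2$-norm of $P_\pi$ to exactly $1$ rather than merely bounding it by $\sqrt{1+\gamma\log_2 N}$-type quantities. So the first step is to show $\|P_\pi\|_2 = 1$. Since the policy is deterministic, $P_\pi = P\Pi$ is row-stochastic, hence $\|P_\pi\|_\infty = 1$ and $P_\pi\bm{1} = \bm{1}$; the latter already gives $\sigma_{\max}(P_\pi) \ge \|P_\pi\bm{1}\|/\|\bm{1}\| = 1$. For the reverse inequality, note that under \cref{def:uniform_local_dynamics} every nonzero entry of $P_\pi$ equals $1/\log_2 N$ and, as recorded in \cref{app:sparsity}, each column of $P_\pi$ carries at most $\log_2 N$ nonzero entries; hence every column sum is at most $1$, i.e.\ $\|P_\pi\|_1 \le 1$. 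The Hölder-type interpolation bound $\|P_\pi\|_2 \le \sqrt{\|P_\pi\|_1\,\|P_\pi\|_\infty}$ already used in \cref{thm:general_local} then gives $\|P_\pi\|_2 \le 1$, so $\|P_\pi\|_2 = 1$. (Equivalently: row-stochasticity forces the column sums, which are $\le 1$ and average to $1$, to all equal $1$, so $P_\pi$ is doubly stochastic.)

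With $\|P_\pi\|_2 = 1$ in hand, the two estimates are one-liners. For the top singular value, the triangle inequality yields $\sigma_{\max}(A_\pi) = \|\mathbb{I} - \gamma P_\pi\|_2 \le 1 + \gamma\|P_\pi\|_2 = 1 + \gamma$. For the bottom singular value I would use the reverse triangle inequality: for any unit vector $x$, $\|A_\pi x\| \ge \|x\| - \gamma\|P_\pi x\| \ge 1 - \gamma$, so $\sigma_{\min}(A_\pi) \ge 1-\gamma$ (equivalently, since $\gamma\|P_\pi\|_2 < 1$ the Neumann series converges and $\|A_\pi^{-1}\|_2 \le (1-\gamma)^{-1}$, replacing the use of \cref{prop:row_norm_inverse} in \cref{thm:general_local}). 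Dividing the two bounds gives $\kappa(A_\pi) \le \frac{1+\gamma}{1-\gamma}$, which simultaneously re-derives — now from a genuinely valid hypothesis — the estimate asserted by Cherrat et al.~\cite{Cherrat_2023}.

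The only delicate point is the column-sum bound $\|P_\pi\|_1 \le 1$, and this is exactly where uniformity does the work: for merely local dynamics (\cref{def:local_dynamics}) a single column may hold $\log_2 N$ entries each of size close to $1$, pushing $\|P_\pi\|_1$ up to $\log_2 N$ and forcing the weaker $\sqrt{N + \gamma N\log_2 N}$ prefactor of \cref{thm:general_local}; once the nonzero entries are pinned to the common value $1/\log_2 N$, that prefactor collapses to $1$. Everything else is the triangle inequality, the Hölder interpolation already invoked in \cref{thm:general_local}, and a reverse-triangle (or Neumann-series) estimate, so I expect no further obstacle beyond carefully justifying the $\|P_\pi\|_1 \le 1$ step from \cref{def:uniform_local_dynamics}.
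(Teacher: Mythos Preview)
Your argument is correct and reaches the same bound, but it is organised differently from the paper's proof. The paper mirrors the structure of \cref{thm:general_local}: it bounds $\sigma_{\max}(A_\pi)$ via the interpolation inequality $\|A_\pi\|_2\le\sqrt{\|A_\pi\|_\infty\|A_\pi\|_1}$ applied to $A_\pi$ itself (obtaining $\sqrt{1+\gamma}\sqrt{1+\gamma}$), and bounds $\sigma_{\min}(A_\pi)$ by applying the same interpolation to $A_\pi^{-1}$ together with \cref{prop:row_norm_inverse,prop:col_norm_inverse}, which give $\|A_\pi^{-1}\|_\infty,\|A_\pi^{-1}\|_1\le(1-\gamma)^{-1}$ from diagonal dominance. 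You instead isolate the single fact $\|P_\pi\|_2=1$ (via $\|P_\pi\|_\infty=1$, $\|P_\pi\|_1\le 1$, and the same interpolation bound), and then the triangle and reverse-triangle inequalities on the operator $2$-norm finish both estimates in one line each. The shared crux is the column-sum bound $\|P_\pi\|_1\le 1$, which is exactly the ``maximum sum of non-diagonal column entries being $\gamma\cdot\log_2(N)\cdot\tfrac{1}{\log_2 N}$'' step in the paper; both arguments stand or fall with that computation. Your route is a bit more economical---it bypasses \cref{prop:row_norm_inverse,prop:col_norm_inverse} and the Neumann/Varah machinery entirely---and it makes explicit the link to the doubly-stochastic remark preceding the theorems, whereas the paper's version has the virtue of paralleling the proof of \cref{thm:general_local} term-by-term, so one sees precisely which factors collapse under uniformity.
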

\begin{proof}
    We first show an upper bound to the largest singular value, using H\"older's inequality:
    \begin{align}
        \sigma_{max}(A_{\pi}) &\leq \sqrt{\| A_{\pi} \|_{\infty}} \sqrt{\| A_{\pi} \|_{1}} \\
        &\leq \sqrt{1 + \gamma} \sqrt{1 + \gamma}
    \end{align}
    The bound for the maximum row sum follows directly from the row-stochasticity of $P_{\pi}$, while the bound for the maximum column sum additionally employs the logarithmically restricted locality of the dynamics underlying $P_{\pi}$, and the maximum sum of non-diagonal column entries being $\gamma \cdot \log_2(N) \cdot \frac{1}{\log_2{N}}$.

    \noindent Next, we show a lower bound to the smallest singular value, using H\"older's inequality and~\cref{prop:row_norm_inverse,prop:col_norm_inverse}:
    \begin{align}
        \sigma_{\min}(A_{\pi}) = \frac{1}{\| A_{\pi}^{-1} \|}
        &\geq \frac{1}{\sqrt{\| A^{-1}_{\pi} \|_{\infty}} \sqrt{\| A_{\pi}^{-1} \|_{1}}} \\
        &\geq \frac{1}{\sqrt{\frac{1}{1-\gamma}} \sqrt{\frac{1}{1-\gamma}}}
    \end{align} 
    \noindent The bound for the maximum row sum follows directly from the row-stochasticity of $P_{\pi}$, while the bound for the maximum column sum exploits the uniform locality as described above. Plugging these results into the definition of the condition number reveals the stated bound.
\end{proof}

\noindent Interestingly, when assuming uniform locality the bound from~\cref{thm:uniform_local} is independent of the system size $N$ and only scales with the discount parameter $\gamma$. On the contrary, the bound for general local dynamics stated in~\cref{thm:general_local} scales as $\mathcal{O}(\sqrt{N \log N})$, assuming $\gamma$ is close to one -- which it usually is in practical setups. As $N$ scales in the qubit number as $N = 2^n$, this bound exhibits an exponential scaling. We assume it is possible to derive tighter bounds also for the general case, by incorporating statistical arguments. This might allow for applying~\cref{prop:col_norm_inverse} for bounding the smallest singular value in the proof of~\cref{thm:general_local}. However, such considerations are out of the scope of this work.

\begin{figure}[t]
    \usetikzlibrary{calc}
    \centering
    \subfigure[\label{subfig:kappa_stochastic}Environments with $\beta=0.1$ (upper plot) and $\beta=0.4$ (lower plot) exponential local dynamics following~\cref{def:exponential_local_dynamics}.]{
    \begin{tikzpicture}
        \centering
        \begin{axis}[
            name=plot2,
            ylabel=$\bm{\kappa}$,
            xtick={2, 4, 6, 8, 10, 12},
            ytick={0, 50, 100},
            yticklabels={0, 50, 100},
            xticklabels={},
            xmin=1.75,xmax=12.25,
            ymin=0.0,ymax=120.0,
            label style={font=\footnotesize},
            tick label style={font=\footnotesize},
            grid=both,
            axis x line=bottom, axis y line=left,
            width=0.99\linewidth,
            height=3cm,
            legend columns=3,
            legend style={/tikz/every even column/.append style={column sep=0.1cm, row sep=0.1cm},at={(0.9,1.65)},anchor=east,yshift=-5mm,font=\scriptsize}]
            ]

            \addplot+[draw=none,name path=A,no markers] %
            	table[x=qubits,y=std_g0.95_upper,col sep=comma]{figures/data/condition_number_perturbation=0.10.csv};
            \addplot+[draw=none,name path=B,no markers] %
            	table[x=qubits,y=std_g0.95_lower,col sep=comma]{figures/data/condition_number_perturbation=0.10.csv};
            \addplot[col_g095!40] fill between[of=A and B];
            \addplot[line width=.7pt,solid,color=col_g095,mark=o] %
            	table[x=qubits,y=avg_g0.95,col sep=comma]{figures/data/condition_number_perturbation=0.10.csv};

            \addplot+[draw=none,name path=A,no markers] %
            	table[x=qubits,y=std_g0.90_upper,col sep=comma]{figures/data/condition_number_perturbation=0.10.csv};
            \addplot+[draw=none,name path=B,no markers] %
            	table[x=qubits,y=std_g0.90_lower,col sep=comma]{figures/data/condition_number_perturbation=0.10.csv};
            \addplot[col_g090!40] fill between[of=A and B];
            \addplot[line width=.7pt,solid,color=col_g090,mark=o] %
            	table[x=qubits,y=avg_g0.90,col sep=comma]{figures/data/condition_number_perturbation=0.10.csv};

            \addplot+[draw=none,name path=A,no markers] %
            	table[x=qubits,y=std_g0.85_upper,col sep=comma]{figures/data/condition_number_perturbation=0.10.csv};
            \addplot+[draw=none,name path=B,no markers] %
            	table[x=qubits,y=std_g0.85_lower,col sep=comma]{figures/data/condition_number_perturbation=0.10.csv};
            \addplot[col_g085!40] fill between[of=A and B];
            \addplot[line width=.7pt,solid,color=col_g085,mark=o] %
            	table[x=qubits,y=avg_g0.85,col sep=comma]{figures/data/condition_number_perturbation=0.10.csv};

            \legend{,,,$\gamma=0.95$,,,,$\gamma=0.90$,,,,$\gamma=0.85$}

        \end{axis}
        \begin{axis}[
            name=plot3,
            at={($(plot2.south)+(0,-0.5cm)$)}, anchor=north,
            xlabel=$\textbf{size of state space}$,
            ylabel=$\bm{\kappa}$,
            xtick={2, 4, 6, 8, 10, 12},
             xticklabels={$2^2$, $2^4$, $2^6$, $2^8$, $2^{10}$, $2^{12}$},
           ytick={0, 20, 40},
            yticklabels={0, 20, 40},
            xmin=1.75,xmax=12.25,
            ymin=0.0,ymax=46.0,
            label style={font=\footnotesize},
            tick label style={font=\footnotesize},
            grid=both,
            axis x line=bottom, axis y line=left,
            width=0.99\linewidth,
            height=3cm,
            legend style={/tikz/every even column/.append style={column sep=0.1cm, row sep=0.1cm},at={(0.33,0.88)},anchor=east,yshift=-5mm,font=\scriptsize}]
            ]

            \addplot+[draw=none,name path=A,no markers] %
            	table[x=qubits,y=std_g0.95_upper,col sep=comma]{figures/data/condition_number_perturbation=0.40.csv};
            \addplot+[draw=none,name path=B,no markers] %
            	table[x=qubits,y=std_g0.95_lower,col sep=comma]{figures/data/condition_number_perturbation=0.40.csv};
            \addplot[col_g095!40] fill between[of=A and B];
            \addplot[line width=.7pt,solid,color=col_g095,mark=o] %
            	table[x=qubits,y=avg_g0.95,col sep=comma]{figures/data/condition_number_perturbation=0.40.csv};

            \addplot+[draw=none,name path=A,no markers] %
            	table[x=qubits,y=std_g0.90_upper,col sep=comma]{figures/data/condition_number_perturbation=0.40.csv};
            \addplot+[draw=none,name path=B,no markers] %
            	table[x=qubits,y=std_g0.90_lower,col sep=comma]{figures/data/condition_number_perturbation=0.40.csv};
            \addplot[col_g090!40] fill between[of=A and B];
            \addplot[line width=.7pt,solid,color=col_g090,mark=o] %
            	table[x=qubits,y=avg_g0.90,col sep=comma]{figures/data/condition_number_perturbation=0.40.csv};

            \addplot+[draw=none,name path=A,no markers] %
            	table[x=qubits,y=std_g0.85_upper,col sep=comma]{figures/data/condition_number_perturbation=0.40.csv};
            \addplot+[draw=none,name path=B,no markers] %
            	table[x=qubits,y=std_g0.85_lower,col sep=comma]{figures/data/condition_number_perturbation=0.40.csv};
            \addplot[col_g085!40] fill between[of=A and B];
            \addplot[line width=.7pt,solid,color=col_g085,mark=o] %
            	table[x=qubits,y=avg_g0.85,col sep=comma]{figures/data/condition_number_perturbation=0.40.csv};

        \end{axis}
        \end{tikzpicture}
        }
    \subfigure[\label{subfig:kappa_deterministic}Environment with deterministic dynamics following~\cref{def:deterministic_dynamics}.]{
    \begin{tikzpicture}
        \centering
        \begin{axis}[
            name=plot1,
            xlabel=$\textbf{size of state space}$,
            ylabel=$\textbf{condition number}~\bm{\kappa}$,
            xtick={2, 4, 6, 8, 10, 12},
            xticklabels={$2^2$, $2^4$, $2^6$, $2^8$, $2^{10}$, $2^{12}$},
            xmin=1.5,xmax=12.5,
            ymin=0.0,ymax=350.0,
            label style={font=\footnotesize},
            tick label style={font=\footnotesize},
            grid=both,
            axis x line=bottom, axis y line=left,
            width=0.99\linewidth,
            height=4cm,
            legend columns=-1,
            legend style={/tikz/every even column/.append style={column sep=0.1cm, row sep=0.1cm},at={(0.955,1.1)},anchor=east,yshift=-5mm,font=\scriptsize}]
            ]

            \addplot+[draw=none,name path=A,no markers] %
            	table[x=qubits,y=std_g0.95_upper,col sep=comma]{figures/data/condition_number_perturbation=0.00.csv};
            \addplot+[draw=none,name path=B,no markers] %
            	table[x=qubits,y=std_g0.95_lower,col sep=comma]{figures/data/condition_number_perturbation=0.00.csv};
            \addplot[col_g095!40] fill between[of=A and B];
            \addplot[line width=.7pt,solid,color=col_g095,mark=o] %
            	table[x=qubits,y=avg_g0.95,col sep=comma]{figures/data/condition_number_perturbation=0.00.csv};

            \addplot+[draw=none,name path=A,no markers] %
            	table[x=qubits,y=std_g0.90_upper,col sep=comma]{figures/data/condition_number_perturbation=0.00.csv};
            \addplot+[draw=none,name path=B,no markers] %
            	table[x=qubits,y=std_g0.90_lower,col sep=comma]{figures/data/condition_number_perturbation=0.00.csv};
            \addplot[col_g090!40] fill between[of=A and B];
            \addplot[line width=.7pt,solid,color=col_g090,mark=o] %
            	table[x=qubits,y=avg_g0.90,col sep=comma]{figures/data/condition_number_perturbation=0.00.csv};

            \addplot+[draw=none,name path=A,no markers] %
            	table[x=qubits,y=std_g0.85_upper,col sep=comma]{figures/data/condition_number_perturbation=0.00.csv};
            \addplot+[draw=none,name path=B,no markers] %
            	table[x=qubits,y=std_g0.85_lower,col sep=comma]{figures/data/condition_number_perturbation=0.00.csv};
            \addplot[col_g085!40] fill between[of=A and B];
            \addplot[line width=.7pt,solid,color=col_g085,mark=o] %
            	table[x=qubits,y=avg_g0.85,col sep=comma]{figures/data/condition_number_perturbation=0.00.csv};

            \addplot [domain=1.5:6.4,dashed,color=col_g095,line width=.7pt]{(sqrt(2^x + 0.95 * 2^x * x))*sqrt(1+0.95))/(1-0.95)};
            \addplot [domain=1.5:6.4,dashed,color=col_g090,line width=.7pt]{(sqrt(2^x + 0.95 * 2^x * x))*sqrt(1+0.9))/(1-0.9)};
            \addplot [domain=1.5:8.4,dashed,color=col_g085,line width=.7pt]{(sqrt(2^x + 0.95 * 2^x * x))*sqrt(1+0.85))/(1-0.85)};
            \addplot[color=black, dashed, domain=0.0:0.4, line width=.7pt] {0.0};


            \addplot [domain=1.5:12.4,dotted,color=col_g095,line width=.7pt]{(sqrt(1+0.95)*sqrt(1+0.95*x)*sqrt(x))/(1-0.95)};
            \addplot [domain=1.5:12.4,dotted,color=col_g090,line width=.7pt]{(sqrt(1+0.90)*sqrt(1+0.90*x)*sqrt(x))/(1-0.90)};
            \addplot [domain=1.5:12.4,dotted,color=col_g085,line width=.7pt]{(sqrt(1+0.85)*sqrt(1+0.85*x)*sqrt(x))/(1-0.85)};
                
            \addplot[color=black, dotted, domain=0.0:0.4, line width=.7pt] {0.0};


            \legend{,,,,,,,,,,,,,,,Bound from~\cref{thm:general_local},,,Empirical Bound~\cref{eq:empirical_bound}};

        \end{axis}
        \end{tikzpicture}
        }
    \subfigure[\label{subfig:kappa_uniform}Environment with uniform local dynamics following~\cref{def:uniform_local_dynamics}.]{
    \begin{tikzpicture}
        \centering
        \begin{axis}[
            name=plot1,
            xlabel=$\textbf{size of state space}$,
            ylabel=$\textbf{condition number}~\bm{\kappa}$,
            xtick={2, 4, 6, 8, 10, 12},
            xticklabels={$2^2$, $2^4$, $2^6$, $2^8$, $2^{10}$, $2^{12}$},
            xmin=1.5,xmax=12.5,
            ymin=0.0,ymax=44.0,
            label style={font=\footnotesize},
            tick label style={font=\footnotesize},
            grid=both,
            axis x line=bottom, axis y line=left,
            width=0.99\linewidth,
            height=4cm,
            legend columns=2,
            legend style={/tikz/every even column/.append style={column sep=0.1cm, row sep=0.1cm},at={(0.9,0.85)},anchor=east,yshift=-5mm,font=\scriptsize}]
            ]

            \addplot+[draw=none,name path=A,no markers] %
            	table[x=qubits,y=std_g0.95_upper,col sep=comma]{figures/data/condition_number_perturbation=1.00.csv};
            \addplot+[draw=none,name path=B,no markers] %
            	table[x=qubits,y=std_g0.95_lower,col sep=comma]{figures/data/condition_number_perturbation=1.00.csv};
            \addplot[col_g095!40] fill between[of=A and B];
            \addplot[line width=.7pt,solid,color=col_g095,mark=o] %
            	table[x=qubits,y=avg_g0.95,col sep=comma]{figures/data/condition_number_perturbation=1.00.csv};

            \addplot+[draw=none,name path=A,no markers] %
            	table[x=qubits,y=std_g0.90_upper,col sep=comma]{figures/data/condition_number_perturbation=1.00.csv};
            \addplot+[draw=none,name path=B,no markers] %
            	table[x=qubits,y=std_g0.90_lower,col sep=comma]{figures/data/condition_number_perturbation=1.00.csv};
            \addplot[col_g090!40] fill between[of=A and B];
            \addplot[line width=.7pt,solid,color=col_g090,mark=o] %
            	table[x=qubits,y=avg_g0.90,col sep=comma]{figures/data/condition_number_perturbation=1.00.csv};

            \addplot+[draw=none,name path=A,no markers] %
            	table[x=qubits,y=std_g0.85_upper,col sep=comma]{figures/data/condition_number_perturbation=1.00.csv};
            \addplot+[draw=none,name path=B,no markers] %
            	table[x=qubits,y=std_g0.85_lower,col sep=comma]{figures/data/condition_number_perturbation=1.00.csv};
            \addplot[col_g085!40] fill between[of=A and B];
            \addplot[line width=.7pt,solid,color=col_g085,mark=o] %
            	table[x=qubits,y=avg_g0.85,col sep=comma]{figures/data/condition_number_perturbation=1.00.csv};

            \addplot[color=col_g095, dashed, domain=1.5:12.4, line width=.7pt] {39.0};
            \addplot[color=col_g090, dashed, domain=1.5:12.4, line width=.7pt] {19.0};
            \addplot[color=col_g085, dashed, domain=1.5:12.4, line width=.7pt] {12.3};
            \addplot[color=black, dashed, domain=0.0:0.4, line width=.7pt] {0.0};

            \legend{,,,,,,,,,,,,,,,Bound from \cref{thm:uniform_local}};

        \end{axis}
        \end{tikzpicture}
        }
    \captionsetup{font=footnotesize}
    \caption{\label{fig:kappa}Condition number averaged over $100$ random \glspl{lse} with underlying local dynamics. Environments with exponential local dynamics are considered in (a), with a clear convergence of the condition number. The special case of deterministic dynamics is considered in (b), with the bound obtained from~\cref{thm:general_local} being extremely loose. From the data one can assume the statistical upper bound in \cref{eq:empirical_bound}. The instances with uniform local dynamics in (c) are well-conditioned for arbitrary system size as follows from~\cref{thm:uniform_local}.}
\end{figure}

We summarize the results of empirically evaluating the condition number for different environment dynamics in~\cref{fig:kappa}. For local dynamics the systems are well-conditioned, as shown in \cref{subfig:kappa_stochastic}. The loose bounds for deterministic dynamics suggest, that the estimate in \cref{eq:estimation_sigma_min} can be tightened from $\sqrt{N}$ to $\sqrt{\log_2(N)}$, resulting in the upper bound
\begin{align}
    \label{eq:empirical_bound}
    \sqrt{\log_2(N) + \gamma \log_2^2(N)} \cdot \frac{\sqrt{1 + \gamma}}{1 - \gamma}.
\end{align}%
This scaling behavior of $\mathcal{O}(\log N)$ aligns with the results in~\cref{subfig:kappa_deterministic}. We therefore suggest it is reasonable to assume, that $\kappa$ grows at most (sub-)linearly with the number of qubit -- i.e.\ logarithmically in the system size. For the local uniform dynamics in~\cref{subfig:kappa_uniform}, the bound given in~\cref{thm:uniform_local} is shown to be tight. As it is independent of the system size, there are no principled roadblocks regarding scalability. The in-between cases in~\cref{subfig:kappa_stochastic} are guaranteed the upper bound from~\cref{thm:general_local}, but with increasing value of $\beta$ seem to approach the bounds of~\cref{thm:uniform_local}. Moreover, the dependence of the loss threshold on the condition number is limited, as observed in~\cref{subsec:threshold}.

\subsection{\label{app:decomposition}Notes on Unitary Decomposition}

Executing the variational \gls{lse} solver from~\cref{subsec:lse} on quantum hardware requires access to a unitary decomposition following~\cref{eq:unitary_decomposition}. For our proof-of-concept experiments we made use of the fact that all real matrices can be decomposed into an affine combination $ A / \|A \| = X + X^t + Y - Z$, with $X, Y, Z$ unitary~\cite{Zhan_2001,Li_2002}.  It has to be noted, that explicitly determining the decomposition requires performing a \gls{svd}. Consequently, using a classical \gls{svd} technique for this construction would be an bottleneck canceling out potential quantum advantage of variational policy iteration. For hardware realizations one also needs to consider the complexity of unitary synthesis~\cite{Rietsch_2024}, which this work factored out by simulating full unitaries using the \texttt{variational-lse-solver} library~\cite{Meyer_2024}.

Closer examining and eliminating this bottleneck is out of the scope of this paper. However, it has already been shown, that quantum access is possible to the \gls{lse} underlying specific \gls{rl} environments~\cite{Cherrat_2023}. A promising tool might also be the use of Szegedy walks, which are especially suited for sparse systems~\cite{Subacsi_2019,Szegedy_2004}. Trading a larger number of decomposition terms for a trivial unitary synthesis, one might also consider tree-based approaches for Pauli decomposition~\cite{Koska_2024}.


\bibliographystyle{IEEEtran}
\bibliography{paper}


\end{document}